\documentclass[submission,copyright,creativecommons]{eptcs}

\usepackage{iftex}

\usepackage{subcaption}

  \usepackage[english]{babel}
  \usepackage[T1]{fontenc}        

\title{Counterfactual Causality for Reachability and Safety based on Distance Functions}
\author{Julie Parreaux
\institute{Aix Marseille Univ, CNRS, LIS, Marseille, France}
\email{julie.parreaux@univ-amu.fr}
\and
Jakob Piribauer
\institute{Technische Universit\"at Dresden, Germany \\ Technische Universit\"at M\"unchen, Germany}
\email{jakob.piribauer@tu-dresden.de}
\and
Christel Baier
\institute{Technische Universit\"at Dresden, Germany}
\email{christel.baier@tu-dresden.de}
}


\usepackage[dvipsnames]{xcolor}
\usepackage{etex}
\usepackage{wrapfig}
\usepackage{graphicx}
\usepackage{amssymb,amsmath,amsthm}

\usepackage[mathscr]{eucal}
\usepackage{verbatim}
\usepackage{comment}
\usepackage{mathtools}
\usepackage{amsfonts}
\usepackage{nicefrac}
\usepackage{color}
\usepackage{graphicx}
\usepackage{float}
\usepackage{tikz}
\usepackage{xspace} 
\usepackage{pgf}
\usetikzlibrary{arrows,automata}

\usepackage[usenames,dvipsnames]{pstricks}
\usepackage{epsfig}
\usepackage{pst-grad} 
\usepackage{pst-plot} 
\usepackage{pstricks-add}
\usepackage[linesnumbered,ruled]{algorithm2e}
\usepackage{hyperref}
\usepackage{cleveref}
\usepackage[backgroundcolor=magenta!10!white]{todonotes}
\usepackage{enumitem}

\usepackage{wasysym}
\usepackage{multirow}


\usetikzlibrary{automata,positioning,shadows}
\usetikzlibrary{calc, shapes, backgrounds,arrows}
\usetikzlibrary{chains, decorations.pathmorphing}
\usetikzlibrary{positioning,decorations.pathreplacing}
\usetikzlibrary{shapes.misc}

\tikzset{every loop/.style={looseness=7}, >=latex}
\tikzset{every picture/.style={>=latex}}
\tikzstyle{PlayerReach}=[draw,circle,minimum size=6mm,inner sep=1.5pt]
\tikzstyle{PlayerSafe}=[draw,rectangle,minimum size=6mm,inner sep=1.5pt]
\tikzstyle{Player}=[draw,diamond,minimum size=7mm,inner sep=1.5pt]
\tikzstyle{target}=[draw=none,rectangle, minimum size=1mm]
\tikzstyle{PlayerReachmin}=[draw,circle, minimum size=4mm,inner sep=0pt]
\tikzstyle{PlayerSafemin}=[draw,rectangle,minimum size=4mm,inner sep=0pt]
\tikzstyle{cause} =[draw=blue,line width = 0.5mm]

\tikzstyle{strat} =[minimum width=0.1cm,line width=0.01mm,draw=none]
\tikzstyle{vecArrow} = [decoration={markings,mark=at position
	1 with {\arrow[scale=.5,>=latex]{>}}}
]

\usepackage{thmtools, thm-restate}

\newtheorem{theorem}{Theorem}
\newtheorem{proposition}[theorem]{Proposition}

\theoremstyle{definition}
\newtheorem{example}[theorem]{Example}
\newtheorem{definition}[theorem]{Definition}
\newtheorem{remark}[theorem]{Remark}


\definecolor{darkgreen}{rgb}{0.1, 0.5, 0.1}
\definecolor{RZ}{HTML}{385803}
\colorlet{simonColor}{Yellow!30!white}
\colorlet{jakobColor}{Green!30!white}



\newcommand{\eqdef}{\,\stackrel{\mathclap{\tiny\mbox{def}}}{=}\,}
\newcommand{\markend}{\hfill ${ \lrcorner}$}

\newcommand{\T}{\mathcal{T}}

\newcommand{\cT}{\T}

\newcommand{\sinit}{s_{\mathit{init}}}
\newcommand{\wgt}{\mathit{wgt}}
\newcommand{\AP}{\mathsf{AP}}

\newcommand{\dpref}{d_{\mathit{pref}}}
\newcommand{\dhamm}{d_{\mathit{Hamm}}}
\newcommand{\dghamm}{d_{\mathit{gHamm}}}

\newcommand{\dlev}{d_{\mathit{Lev}}}
\newcommand{\dstrat}[1]{d^{#1}}

\newcommand{\dist}{\mathit{dist}}
\newcommand{\dHH}{\ensuremath{d^*}\xspace}

\newcommand{\game}{\ensuremath{\mathcal{G}}\xspace}
\newcommand{\Pl}{\ensuremath{\Pi}\xspace}
\newcommand{\ReachPl}{\ensuremath{\mathsf{Reach}}\xspace}
\newcommand{\SafePl}{\ensuremath{\mathsf{Safe}}\xspace}

\newcommand{\Locs}{\ensuremath{V}\xspace}
\newcommand{\LocsReach}{\ensuremath{\Locs_{\ReachPl}}\xspace}
\newcommand{\LocsSafe}{\ensuremath{\Locs_{\SafePl}}\xspace}

\newcommand{\LocsT}{\ensuremath{\Locs_\textsl{Eff}}\xspace}

\newcommand{\loc}{\ensuremath{v}\xspace}
\newcommand{\locinit}{\ensuremath{v_{\textsf{i}}}\xspace}
\newcommand{\locT}{\ensuremath{\loc_\textsl{Eff}}\xspace}

\newcommand{\Trans}{\ensuremath{\Delta}\xspace}
\newcommand{\trans}{\ensuremath{\delta}\xspace}

\newcommand{\loosestrategy}{\ensuremath{\sigma}\xspace}
\newcommand{\winstrategy}{\ensuremath{\tau}\xspace}
\newcommand{\strategy}{\ensuremath{\mu}\xspace}

\newcommand{\play}{\ensuremath{\xi}\xspace}
\newcommand{\looseplay}{\ensuremath{\pi}\xspace}
\newcommand{\winplay}{\ensuremath{\rho}\xspace}


\newcommand{\N}{\mathbb{N}}

\begin{document}
\maketitle

\let\thefootnote\relax\footnotetext{\textbf{Funding:}{ {This work was partly funded by DFG Grant 389792660 as part of TRR 248 (Foundations of Perspicuous Software Systems), the Cluster of Excellence EXC 2050/1 (CeTI, project ID 390696704, as part of Germany’s Excellence Strategy), and  the DFG projects BA-1679/11-1 and BA-1679/12-1, and the ANR project Ticktac (ANR-18-CE40-0015).}}}


\begin{abstract}
Investigations of causality in operational systems aim at providing human-understandable explanations of \emph{why} a system behaves as it does.
There is, in particular, a demand to explain what went wrong on a given counterexample execution that shows that a system does not satisfy a given specification. 
To this end, this paper investigates a notion of counterfactual causality in transition systems based on Stalnaker's and Lewis' semantics of counterfactuals in terms of most similar possible worlds and introduces a novel corresponding notion of counterfactual causality in two-player games. 
 Using distance functions between paths in transition systems to capture the similarity of executions, this notion defines whether reaching a certain set of states is a cause for the fact that a given execution of a system  satisfies an undesirable reachability or safety property. Similarly, using distance functions between memoryless strategies in reachability and safety games, 
 it is defined whether reaching a set of states is a cause for the fact that a given strategy for the player under investigation is losing.

The contribution of the paper is two-fold:
In transition systems, it is shown that counterfactual causality can be checked in polynomial time for three prominent distance functions between paths. 
In two-player games, 
the introduced notion of counterfactual causality is shown to be checkable in polynomial time for two natural distance functions between memoryless strategies.
Further,  a notion of  explanation that  can be extracted from a counterfactual cause and that pinpoints  changes to be made to the given strategy in order to transform it into a winning strategy  is defined. For the two distance functions under consideration, the problem to decide whether such an explanation imposes only minimal necessary changes to the given strategy with respect to the used distance function turns out to be coNP-complete and not to be solvable in polynomial time if P is not equal to NP, respectively.

\end{abstract}

\section{Introduction}

%
%
%

Modern software and hardware systems have reached a level of complexity that makes it impossible for humans to assess whether a system behaves as intended without tools  tailored for this task.
To tackle this problem, automated verification techniques have been developed. \emph{Model checking} is one prominent such technique: A model-checking algorithm takes  a mathematical model of the system under investigation and a formal specification of the intended behavior and  determines whether all possible executions of the model satisfy the specification.
While the results of a model-checking algorithm provide guarantees on the correctness of a system or affirm the presence of an error, 
their usefulness  is, nevertheless, limited as they do not provide a human-understandable explanation of the  behavior of the system.

To provide additional information on \emph{why} the system behaves as it does,
certificates witnessing the result of the model-checking procedure, in particular counterexample traces in case of a negative result, have been studied extensively (see, e.g., \cite{ClarkeGMZ95,MaPn95,CGP99,Namjoshi01}). Due to the potentially still enormous size of counterexample traces and other certificates, a line of research has emerged that tries
to distill  comprehensible explications of what {causes} the system to behave as it does using formalizations of   \emph{causality} (see, e.g., \cite{Pearl09,Peters2017,ICALP21}).

\paragraph*{Forward- and backward-looking causality}

There are two fundamentally different types of notions of causality: \emph{forward-looking} and \emph{backward-looking} notions \cite{vandePoel2011}.
In the context of operational system models, forward-looking causality describes general causal relations between events that might happen along some possible executions. Backward-looking causality, on the other hand, addresses the causal relation between events along a given execution of the system model. 
This distinction is captured in more general contexts by the distinction between
\emph{type-level} causality addressing general causal dependencies between events that might happen when looking forward in a world model, and \emph{token-level} or \emph{actual} causality, corresponding to the backward view, that addresses causes for a particular event that actually happened  (see, e.g., \cite{Halpern15}). 

Notions of \emph{necessary} causality are typically  forward-looking: A necessary cause $C$ for an effect $E$ is an event that occurs on every execution that exhibits the effect $E$ (see, e.g.,  \cite{Baier2022Causality}, and for a philosophical analysis of necessity in causes \cite{Mackie65}).
The backward view naturally arises when the task is to explain what went wrong after an undesired effect has been observed. In the verification context, the backward view is natural for explaining counterexamples, see e.g. \cite{Zeller02,BallNR03,GroveV03,RenieresR03,WYIG06,WangAKGSL2013}. Most of these techniques rely on the \emph{counterfactuality} principle, which has been originally studied in philosophy \cite{Hume1739,Hume1748,Stalnaker1968,Lewis1973,Lewis1973counterfactuals} 
and formalized mathematically by Halpern and Pearl \cite{HalpernP2001,HalpernP04,HalpernP05,Halpern15}. 
Intuitively, counterfactual causality requires that 
the {effect} would not have happened, if the {cause} had not occurred, in combination with some minimality constraints for causes.
The most prominent account for the semantics of the involved counterfactual implication is provided 
   by Stalnaker and Lewis 
\cite{Stalnaker1968,Lewis1973,Lewis1973counterfactuals} 
in terms of closest, i.e.,  most similar, possible worlds. 
The statement ``if the cause $C$ had not occurred, then the effect $E$ would not have occurred'' holds true if in the worlds that are most similar to the actual world 
and in which $C$ did not occur, $E$ also did not occur. 
Interpreting executions of a system as possible worlds,
the actual world is an execution $\pi$ where both the effect $E$ and its
counterfactual cause $C$ occur, while the effect $E$ does not occur 
in alternative executions that
are as similar as possible to $\pi$ and that do not exhibit~$C$.

%


For a more detailed discussion on the distinction between forward- and backward looking causality and related concepts for responsibility, we refer the reader, e.g., to \cite{vandePoel2011,YazdanpanahD16,YazdanpanahDJAL19,BaierFM21,ICALP21}.

%

%
%
%
%
%
%

\paragraph*{Defining counterfactual causality in transition systems and reachability games}

%

To define our back\-ward-looking notion of counterfactual causality in transition systems, we follow an approach similar to the one by Groce et al \cite{groce2006error} who presented a Stalnaker-Lewis-style formalization of  counterfactual dependence of events using distance functions.
We consider the case where effects are reachability or safety properties and causes are sets of states.
To illustrate the idea, let
 $\T$ be a transition system and let  $E$ and $C$  be disjoint sets of states of $\T$ indicating a reachability effect and a potential cause, respectively. 
Consider an execution $\pi$ that reaches the effect set and the potential cause set.
We  employ the counterfactual reading of causality by Stalnaker and Lewis by viewing executions as possible worlds  using a similarity metric  $d$ on paths:
Reaching $C$ was a cause for $\pi$ to reach $E$ if all paths $\zeta$, that do not reach $C$ and that are most similar to $\pi$ according to $d$ among all paths with this property, satisfy $\zeta\vDash \Box \neg E$, i.e.,  they do not reach $E$.
So, we first determine the minimal similarity-distance $d_{\min}=\min \{ d(\pi,\zeta) \mid \zeta\vDash \Box \neg C\}$ from $\pi$ to a path $\zeta$ that does not reach $C$.
Then, we check whether all paths that do not reach $C$ and have similarity-distance $d_{\min}$ to $\pi$ do not reach $E$: Do all 
$
\zeta\in \{\zeta^\prime\mid d(\pi,\zeta^\prime)=d_{\min}$  and $\zeta^\prime\vDash \Box \neg C\}$  satisfy $\Box \neg E$? 
If the answer is yes, it is the case that ``if $C$ had not occurred, then $E$ would not have occurred'' and so $C$ is a counterfactual cause for $E$ on $\pi$.

\begin{example}
\label{ex:traffic}
Consider the following distance function on paths in a labeled transition system $\cT$ with states $S$ and a labeling function $L\colon S\to \mathcal{A}$ for a set of labels $\mathcal{A}$:
For paths $\pi=s_0,s_1,\dots$ and $\pi^\prime=t_0,t_1,\dots$, we define 
$
\dist(\pi,\pi^\prime) = | \{n\in \mathbb{N} \mid L(s_n)\not=L(t_n) \}|$.
So, paths are more similar if their traces  differ at fewer positions. To determine whether $C$ is a cause for $E$ on $\pi$, we  first determine what the least number  $n_{\min}$ of changes to the state labels  of $\pi$ is  to obtain a path $\zeta$ that does not reach $C$. Then, we have to check whether all paths differing from $\pi$ in $n_{\min}$  labels and  not reaching  $C$ do  not reach $E$. If this is the case, $C$ is a counterfactual cause for $E$ on $\pi$ with respect to  $\dist$.

	\begin{figure}[t]
	\resizebox{.65\textwidth}{!}{
		\includegraphics{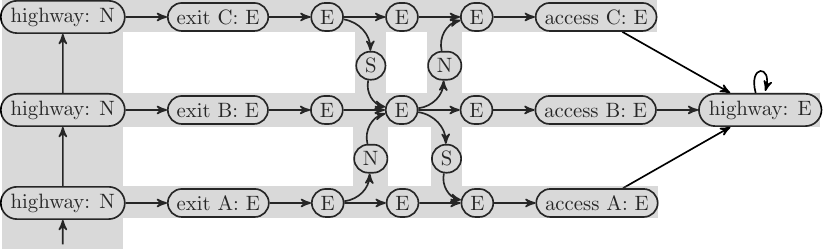}}	
		\hspace{12pt}			
		\resizebox{.3\textwidth}{!}{
		\begin{tikzpicture}
		[scale=1,->,>=stealth',auto ,node distance=0.5cm, thick]
		\tikzstyle{r}=[thin,draw=black,rectangle]
		
		\node[scale=1, rounded rectangle, draw] (start) {$a$};
		\draw[<-] (start) --++(0,+0.7);
		\node[scale=1, rounded rectangle, draw, below=.7 of start,xshift=-2cm] (step0) {$b$};
		\node[scale=1, rounded rectangle, draw, below=.7 of start,xshift=2cm] (step1) {$b$};
		
		\node[scale=1, rounded rectangle, draw, below=.7 of step0,xshift=-1cm] (step00) {$c$};
		\node[scale=1, rounded rectangle, draw, below=.7 of step0,xshift=1cm] (step01) {$a$};
		
		\node[scale=1, rounded rectangle, draw, below=.7 of step00] (step000) {$d$};
		
		\node[scale=1, rounded rectangle, draw, below=.7 of step01] (step011) {$d$};

		\node[scale=1, rounded rectangle, draw, below=.7 of step1] (step10) {$c$};

		\node[scale=1, rounded rectangle, draw, below=.7 of step10] (step100) {$d$};
		

		\draw[color=blue,rounded corners] (step1)+(-.4,-.3)    rectangle  node[right=0,yshift=.45cm] {$\mathit{cause}$}  ($(step1)+(+0.4,+0.3)$);
		
			\draw[color=red,rounded corners] (step011)+(-.4,-.3)    rectangle  node[above=0.25] {$\mathit{effect}$}  ($(step100)+(+0.4,+0.3)$);

	\draw[color=black,->] (start) edge  (step0) ;
	\draw[color=black,->,line width=2pt] (start) edge  (step1) ;
	\draw[color=black,->] (step0) edge  (step00) ;
	\draw[color=black,->] (step0) edge  (step01) ;
	\draw[color=black,->,line width=2pt] (step1) edge  (step10) ;
	\draw[color=black,->] (step00) edge  (step000) ;
	
	\draw[color=black,->] (step01) edge  (step011) ;
	
	\draw[color=black,->,line width=2pt] (step10) edge  (step100) ;
	

%
%
%
%
%

	\end{tikzpicture}}
\caption{On the left: example transition system modelling a traffic grid (Ex. \ref{ex:traffic}).  On the right: Example of a $\dhamm$-counterfactual cause that is not a $\dpref$-counterfactual cause (Ex. \ref{ex:dpref}).}
\label{fig:example_traffic}
		\end{figure}

Now, consider the example transition system $\cT$ modelling a  road system with a highway going north that has three  exits into a small town which can be left again on a highway heading east depicted in Fig. \ref{fig:example_traffic}.
		Each state is labeled with $N$, $E$, or $S$ for north, east, and south as indicated in Fig. \ref{fig:example_traffic} depending on the direction the cars move on the respective road.
		Say, an agent  traverses the system via the path $\pi$ with trace $NNE^\omega$, i.e., by taking exit B from the first highway and
then going eastwards straight through the town.		
Assume that there is a traffic jam on access B while the other access roads are free. The question is now whether taking exit B was a cause for being stuck in slow traffic  later on, i.e., for the effect $\{\text{access B}\}$.
		First, note here that the set $\{\text{exit B}\}$ is  not a forward-looking necessary cause for reaching $\{\text{access B}\}$. There are  paths through the  system that avoid $\{\text{exit B}\}$, but reach $\{\text{access B}\}$.
		
		However, given the fact that the agent traversed the town by going straight eastwards, it is reasonable to say that the agent would have reached a different access road if she had taken a different exit from the first highway.
		This is  reflected in the counterfactual definition using  $\dist$: There are two paths that do not reach exit B and whose traces differ from $\pi$ at only one position, namely the paths with trace $NE^\omega$ and $NNNE^\omega$. These paths do also not reach access B. So,  $\{\text{exit B}\}$
		is a counterfactual cause for $\{\text{access B}\}$ on $\pi$; if the agent had taken exit A or C, she would not have hit the low traffic flow at access B. 
%
%
\markend
\end{example}

In the context of two-player reachability games, causality has  been used as a tool to solve games \cite{BaierCFFJS21}. 
In our work, we  focus on explaining why a certain strategy does not allow the player to win. 
More precisely, in reachability games between players 
with a safety and the complementing reachability objective, respectively, we consider the situation where one of the players $\Pl$ has a winning strategy, but loses the game using a strategy $\sigma$. We introduce a notion of counterfactual causality that aims to provide insights into what is wrong with strategy $\sigma$ by
transferring the counterfactual definition using  distance functions $d$ on memoryless strategies. A set of states $C$ is said to be a $d$-counterfactual cause for the fact that $\sigma$ is losing if all memoryless strategies $\tau$, that make sure that $C$ is not reached and have minimal $d$-distance to $\sigma$ among all such strategies, are winning. Furthermore, we introduce \emph{counterfactual explanations} that specify minimally invasive changes of $\sigma$'s decisions required to turn $\sigma$ into a winning strategy.

\paragraph*{Contributions}
\begin{itemize}
\item
We show that $d$-counterfactual causal relationships in transition systems (defined as in \cite{groce2006error}) 
can be checked in polynomial time for the following three distance metrics $d$ (Sec. \ref{sub:check_TS}):
\begin{enumerate}
\item
the prefix distance: paths are more similar if their traces share a longer prefix.
\item
the Hamming distance that counts the positions at which  traces of paths differ.
\item
the Levenshtein distance that counts how many insertions, deletions, and substitutions are necessary to transform the trace of one path to the trace of another path.
\end{enumerate}
Furthermore, we show that the  notion of $d$-counterfactual causality for the Hamming distance is consistent with Halpern and Pearl's but-for causes \cite{HalpernP04,HalpernP05}.
%
\item
In reachability games, we provide a generalization of this notion using similarity metrics on memoryless deterministic strategies. 
We show that for the Hausdorff lifting of the prefix distance on paths to a distance function on memoryless deterministic strategies, the resulting notion can be checked in polynomial time (Sec. \ref{sec:counterfactual_games}).
\item
We introduce a notion of \emph{counterfactual explanation} that can be computed from a counterfactual cause (Sec. \ref{sec:explain-strat}).
An explanation specifies where a non-winning  strategy needs to be changed. Of particular interest are $D$-minimal explanations that enforce only minimal necessary changes with respect to a distance function $D$ on strategies. For two distance functions related to the Hamming distance, we show that checking whether an explanation is minimal is coNP-complete and not in P if P$\not=$NP, respectively.
\end{itemize}
An overview of the complexity results can be found in Table~\ref{table:contribution}.


\begin{table}[tbp]
	\centering
	\hspace{-24pt}
	\begin{subtable}[b]{.3\linewidth}
	\begin{tabular}{|c|c|}
		\hline
		distance $d$ & causality \\
		\hline
		\multirow{2}{*}{prefix} & in P  \\
		& (Thm.~\ref{thm:check_dpref_TS}) \\
		\hline
		\multirow{2}{*}{Hamming} & in P  \\ 
		& (Thm.~\ref{thm:checking_dhamm}) \\
		\hline
		\multirow{2}{*}{Levenshtein} & in P \\
		& (Thm.~\ref{thm:checking_Levenstein}) \\
		\hline
	\end{tabular}
	\end{subtable}
	\begin{subtable}[b]{.6\linewidth}
	\begin{tabular}{|c|c|c|}
		\hline
		distance $d$ & causality & explanations \\
		\hline
		Hausdorff lifting $\dpref^H$  & 
		in P & \\
		of the prefix distance & (Thm.~\ref{thm:checking-pref-game}) & \\
		\hline
		Hamming strategy & in P in acyclic games &
		coNP-complete  \\
		distance $\dhamm^s$ & (Thm.~\ref{thm:checking-hamm-game}) & (Cor. \ref{cor:check-hamm-game}) \\
		\hline
		Hausdorff-inspired  &  & not in P if P$\not=$NP \\
		distance \dHH & & (Cor. \ref{cor:check-hamm-game}) \\
		\hline
	\end{tabular}
	\end{subtable}
	\caption{Overview of the complexity results. On the left, the complexities of checking $d$-counterfactual 
		causality  in transition systems, and on the right, the complexities of checking $d$-counterfactual 
		causality and  $d$-minimality of explanations  in reachability games.}
	\label{table:contribution}
\end{table}

\paragraph*{Related work}

Ways to pinpoint the problematic steps in a counterexample trace by 
localizing errors  have widely  been studied  \cite{Zeller02,BallNR03,GroveV03,RenieresR03,WYIG06,WangAKGSL2013}.
For counterfactuality in transition systems, we follow the approach of \cite{groce2006error} with distance metrics. In contrast to the causes in this paper,  causes in \cite{groce2006error} are formulas in an expressive logic that can precisely talk about the valuation of variables after a certain number of steps. Further  \cite{groce2006error} is not concerned with checking causality, but with finding causes, which, due to the expressive type of causes, algorithmically boils down to finding executions avoiding the effect with a minimal distance to the given one.

Based on  counterfactuality, Halpern and Pearl \cite{HalpernP04,HalpernP05,Halpern15} provided an influential formalization of causality using structural equation models, which   has served as the basis for various notions of causality in the verification context  (see,e.g., \cite{BeerBCOT2012,Leiner-FischerL2013}).
A key ingredient is the notion of \emph{intervention} to provide a semantics for the counterfactual implication in Hume's definition of causality. 
 An intervention in a structural equation model sets a variable to a certain value by force, ignoring its dependencies on other variables, and evaluates the effects of this enforced change. In a sense, a minimal set of interventions to avoid a cause then leads to a most similar execution avoiding the cause. We will discuss the relations between our definition and the Halpern-Pearl definition in more detail in Section \ref{sec:HP-causality}.
 In \cite{BeerBCOT2012}, interventions are employed to counterexample traces in transition systems by allowing to flip  atomic propositions along a  trace. In contrast to our notion of counterfactual causes, this is tailored for complex linear time properties, but does not provide insights for reachability and safety. Furthermore, the flipping of atomic propositions can be seen as a change in the transition system while our definition  considers alternative executions without manipulating the system.
In~\cite{coenen2022temporal}, the Halpern-Pearl approach is applied to provide a counterfactual definition of 
causality in reactive systems. 
A distance partial order, namely the subset relation on sets of positions at which traces differ, is used to describe which interventions are acceptable as they constitute minimal changes  necessary to avoid the cause.  Checking causality is  shown to be decidable by a formulation as a hyperlogic model-checking problem.
Furthermore, notions of necessary and sufficient causes as sets of states in transition systems have been considered \cite{Baier2022Causality}. These do not rely on the counterfactuality principle and are of forward-looking nature.




We are not aware of formalisations of causality in game structures. The related concept of responsibility, has been investigated in 
multi-agent models \cite{YazdanpanahD16,YazdanpanahDJAL19}. Notions of forward and backward responsibility of players in multi-player
game structures with acyclic tree-like  arena have been studied \cite{BaierFM21}.

For a detailed overview of work on causality and related concepts in operational models, we refer the reader to the survey articles \cite{Chockler16,ICALP21}.

\section{Preliminaries}
\label{sec:prelim}

We briefly present notions we use and our notation. For details, see \cite{BK08,GradelTW-02}.

\noindent \textit{Transition systems.}
A transition system is a tuple $\cT=(S,\sinit,\to,L)$ where $S$ is a finite set of states, $\sinit\in S$ is an initial state, $\to\subseteq S\times S$ is a transition relation and $L\colon S\to 2^{\AP}$ is a labeling function where $\AP$ is a set of atomic propositions.
A path in a transition system is a finite or infinite sequence of states $s_0 s_1\dots$ such that $s_0=\sinit$ and, for all suitable indices $i$, there is a transition from $s_i$ to $s_{i+1}$, i.e., $(s_i,s_{i+1})\in \to$. Given a path $\pi=s_0 s_1\dots$, we denote its trace $L(s_0)L(s_1)\dots$ by $L(\pi)$.
If there are no outgoing transitions from a state, we call the state \emph{terminal}.

\noindent \textit{Computation tree logic (CTL).} The branching-time logic CTL consists of state formulas that are evaluated at states in a transition system  formed by $\Phi ::= \top \mid a \mid \Phi\land \Phi \mid \neg \Phi \mid \exists \varphi \mid \forall \varphi$ 
where $a\in \AP$ is an atomic proposition 
 and path formulas evaluated on paths formed by $\varphi::= \bigcirc \Phi \mid \Phi \mathrm{U} \Phi$. The semantics of the temporal operators in path formulas is as usual. We use the abbreviations $\lozenge \Phi$ for $\top \mathrm{U} \Phi$ and $\Box \Phi = \neg \lozenge\neg \Phi$ and also allow sets of states $T$ in the place of state formulas. The semantics of $\exists \varphi$ are that there exists a path starting in the state at which the formula is evaluated that satisfies $\varphi$; $\forall \varphi$ is defined dually to that as usual. Model checking of CTL-formulas can be done in polynomial time. For details, see \cite{BK08}.

\noindent \textit{Reachability games.}
	A \emph{reachability game} is a tuple $\game = (\Locs, \locinit, \Trans)$ where  
	$\Locs = \LocsReach \uplus \LocsSafe \uplus \LocsT$ is the set of 
	vertices shared between players \ReachPl and 
	\SafePl, and  some target vertices $\LocsT$ ($\textsl{Eff}$ for effect).
	$\locinit \in \Locs \setminus \LocsT$ is the initial vertex and 
	$\Trans \subseteq \Locs \times \Locs$ is the set of edges.
We denote by 
$\Trans(\loc)$ the set of edges from \loc.
W.l.o.g., we assume that target vertices are terminal states, 
i.e. for all vertices $\loc \in \LocsT$, $\Trans(\loc) = \emptyset$. A 
\emph{finite play} is a finite sequence of vertices 
$\looseplay = \loc_0\loc_1\cdots \loc_k\in \Locs^*$ such
that for all $0\leq i<k$, $(\loc_i,\loc_{i+1})\in \Trans$. A \emph{play} is either 
a finite play ending in a target vertex, or an infinite sequence of vertices 
such that every finite prefix is a finite play.
Transition systems can be viewed as one-player games.

A \emph{strategy} for \ReachPl in a reachability game \game is a mapping
$\loosestrategy \colon \Locs^* \LocsReach \to \Locs$. A play or finite
play $\looseplay = \loc_0\loc_1\cdots$ is a \emph{\loosestrategy-play} if
for all~$k$ with $\loc_k \in \LocsReach$, we have 
$\loosestrategy(\loc_0\cdots\loc_k) = \loc_{k+1}$. A strategy \loosestrategy is 
an \emph{MD-strategy} (for memoryless deterministic) if for all finite plays \play and $\play'$ with the same last 
vertex, we have that $\loosestrategy(\play)=\loosestrategy(\play')$. In this paper, we mainly use MD-strategies and write 
$\loosestrategy(\loc_k)$ instead of $\loosestrategy(\loc_0\cdots\loc_k)$ for MD-strategies $\loosestrategy$.
 Moreover, under 
a (partial) MD-strategy \loosestrategy, we define the 
\emph{reachability game under \loosestrategy}, denoted by $\game^{\loosestrategy} = 
(\Locs, \locinit, \Trans^{\loosestrategy})$, by removing edges not chosen by $\sigma$, i.e.,
$
\Trans^{\loosestrategy} = \Trans \setminus 
	\{ (\loc, \loc')\in \Trans \mid  \loc \in \LocsReach\text{ and $\sigma(\loc)$ is defined and $\sigma(\loc)\not=(\loc, \loc')$} \}
$.
When \loosestrategy is completely defined, $\game^{\loosestrategy}$ is a 
transition system. Finally, a strategy is \emph{winning} if all 
\loosestrategy-plays starting in \locinit  end in a target vertex. Analogous definitions 
apply to \SafePl.
	In reachability games, either \ReachPl or \SafePl wins with an MD-strategy. This winning strategy can be computed in 
	polynomial time (see, e.g., \cite{GradelTW-02}).

\noindent \textit{Distance function.}
A \emph{distance function} on a set $A$ is a function $d\colon A \times A \to \mathbb{R}_{\geq 0}\cup\{\infty\}$ such that 
  $d(x,x)=0$  for all $x \in A$ and
  $d(x,y)=d(y,x)$  for all $x,y \in A$.
It is called a \emph{pseudo-metric} if additionally
 $d(x,y)+d(y,z)\geq d(x,z)$  for all $x,y,z \in A$, and
 a \emph{metric} if further 
 $d(x,y)=0$ holds iff $x=y$  for all $x,y \in A$.

\section{Counterfactual causes in transition systems}
\label{sec:counterfactual_in_TS}

In this section, we introduce the backward-looking notion of counterfactual causes in transition systems using distance functions (Section \ref{sub:definition_TS}). Afterwards, we prove that the definition can be checked in polynomial time for three well-known distance functions (Section~\ref{sub:check_TS}). Finally, we illustrate similarities between our notion of counterfactual causality to the definition of causality by Halpern and Pearl (Sec \ref{sec:HP-causality}).
Proofs omitted here can be found in the extended version \cite{extended}.

\subsection{Definition}
\label{sub:definition_TS}
The effects we consider  are reachability or safety properties $\Phi=\lozenge E$ or $\Phi=\Box \neg E$ for a set of states $E$. As the behavior of the system after $E$ has been seen is not relevant for these properties, we assume that $E$ consists of terminal states.
\begin{definition}[$d$-counterfactual cause in transition systems]
	\label{def:conterfactual_TS}
Let $\cT$ be a transition system and let $d$ be a distance function on the set of maximal paths of $\cT$.
Let $E$ be a set of terminal  states and let $C$ be a set of states disjoint from $E$. Let $\Phi=\lozenge E$ or $\Phi=\Box \neg E$.
Given a maximal path $\pi$ that visits $C$ and satisfies $\Phi$ in $\cT$, we say that $C$ is a \emph{$d$-counterfactual cause for $\Phi$ on $\pi$} if 
\begin{enumerate}
\item there is a maximal path $\rho$ in $\cT$ that does not visit $C$, and
\item all maximal paths $\rho$ with $\rho\vDash \Box \neg C$ with minimal distance to $\pi$  do not satisfy $\Phi$.
In other words,  all maximal paths $\rho$ with $\rho\vDash \Box \neg C$ such that $d(\pi,\rho)\leq d(\pi,\rho^\prime)$ for all $\rho^\prime$ with
$\rho^\prime \vDash \Box \neg C$ satisfy $\rho\vDash \neg \Phi$. 
\end{enumerate}
\end{definition}

The choice of the similarity distance $d$ of course heavily influences the notion of $d$-counterfactual cause.
In this paper, we will instantiate the definition with three distance functions that are among the most prominent distance functions between traces (or words). 
An experimental investigation to clarify in which situations what kind of distance functions leads to a desirable notion of causality, however, remains as future work.

 \noindent\textit{Prefix metrics $\dpref^{\AP}$ and $\dpref$:} given two paths $\pi$ and $\rho$, let $n(\pi,\rho)$ be the length of the longest common prefix of their traces $L(\pi)$ and $L(\rho)$. Then, 
$
\dpref^{\AP}(\pi,\rho)\eqdef 2^{-n(\pi,\rho)}$.
We can also define the distance on paths instead of traces, which will be used later on: $\dpref(\pi,\rho)\eqdef 2^{-m(\pi,\rho)}$ where 
 $m(\pi,\rho)$ is the length of the longest common prefix of $\pi$ and $\rho$ as paths.
This can be seen as a special case of $\dpref^{\AP}$ if we assume that all states have a unique label.

The prefix metric measures similarity in a temporal way saying that executions are more similar if they initially agree 
for a longer period of time. If no further structure of the transition system or meaning of the labels is known, this distance function might be a 
reasonable  choice for counterfactual causality.

 \noindent\textit{Hamming distance $\dhamm$:} Given two words $w=w_0\dots w_n$ and $v=v_0\dots v_n$ of the same length,
 we define
$
\dhamm(w,v)  \eqdef   |\{0\leq i \leq n \mid w_i \not= v_i \} |$.
For two maximal paths $\pi$ and $\rho$ of the same length in a transition system $\cT$ with labeling function $L$, we define 
$\dhamm(\pi,\rho) \eqdef \dhamm(L(\pi),L(\rho))$. So, the distance between two paths is the Hamming distance of their traces. 

The Hamming distance seems to be a reasonable measure if a system naturally proceeds through different layers, e.g., if a counter is increased in each step. Then, traces are viewed to be more similar if they agree on more layers. The temporal order of these layers, however, does not play a role.

 \noindent\textit{Levenshtein distance $\dlev$ \cite{levenshtein1966binary}:}
Given two words $w=w_0\dots w_n$ and $v=v_0\dots v_m$, the Levenshtein distance is defined as the minimal number of editing operations needed to produce $v$ from $w$ where the allowed operations are insertion of a letter, deletion of a letter, and substitution of a letter by a different letter. 
Formally, we define $\dlev$ in terms of \emph{edit sequences}.
Let $\Sigma$ be an alphabet and $v,w\in \Sigma^\ast\cup\Sigma^\omega$ be two words over $\Sigma$.
The \emph{edit alphabet} for $\Sigma$ is defined as 
$
\Gamma \eqdef  (\Sigma\cup\{\varepsilon\})^2 \setminus \{(\varepsilon,\varepsilon)\}
$
where $\varepsilon$ is a fresh symbol.
An edit sequence for $v$ and $w$ is now a word $\gamma\in \Gamma^\ast\cup\Gamma^\omega$ such that the projection of $\gamma$ onto the first component results in $v$ when all $\varepsilon$s are removed and the projection of $\gamma$ onto the second component results in $w$ when all $\varepsilon$s are removed.
E.g., let $\Sigma=\{a,b,c\}$, $v=abbc$ and $w=accbc$. One  edit sequence is 
$
\gamma= (a,a)(b,c)(\varepsilon,c)(b,b)(c,c)$.
The weight  of an edit sequence  $\gamma=\gamma_1\gamma_2\dots$  is defined as
$
\wgt(\gamma)=| \{i \mid \gamma_i\not= (\sigma,\sigma)\text{ for all $\sigma \in \Sigma$}\}|$.
Then, for all words $v\in \Sigma^\ast\cup\Sigma^\omega$ and $w\in \Sigma^\ast\cup\Sigma^\omega$, we define
$
\dlev(v,w)=\min\{\wgt(\gamma)\mid \gamma \text{ is an edit sequence for $v$ and $w$}\}$.
%
%
Again, we obtain a pseudo-metric on paths  via the Levenshtein metric on traces.
%

The Levenshtein distance might be particularly useful if 
labels model actions that are taken. Two executions that are obtained by sequences of actions that only differ by 
inserting or leaving out some actions, but otherwise using the same actions, are considered to be similar in this case.

\begin{example}\label{ex:dpref}
 Let us  illustrate  counterfactual causality for the prefix metric $\dpref$ and the Hamming distance $\dhamm$.
%
Consider the  transition system depicted in Figure \ref{fig:example_traffic}. A path $\pi$ as indicated by the bold arrows on the right via the potential cause to the effect  has been taken:
This is not a $\dpref$-counterfactual cause on $\pi$: The most similar paths to $\pi$ that do not reach $\mathit{cause}$  are both paths that move to the left initially. As one of these paths reaches $\mathit{effect}$, the set $\mathit{cause}$ is not a $\dpref$-counterfactual cause for reaching $\mathit{effect}$. 

Considering the distance function $\dhamm$ with the labels of the states as in Figure \ref{fig:example_traffic}, we get a different result:
The trace of $\pi$ is $abcd$. The paths that avoid the potential cause have traces $abcd$ and $abad$, respectively. So, the most similar path avoiding $\mathit{cause}$ is the path on the left with trace $abcd$ that also avoids $\mathit{effect}$. So, $\mathit{cause}$  is a $\dhamm$-counterfactual cause on $\pi$ for $\lozenge\mathit{effect}$. Intuitively, this can be understood as saying if the system had avoided $\mathit{cause}$ but otherwise behaved (as similar as possible to) as it did in terms of the produced trace, the effect would not have occurred. 
In particular, if labels represent actions that have been chosen, this is  a reasonable reading of causality.
\markend
\end{example}


%
%
%

\subsection{Checking counterfactual causality in transition systems}
\label{sub:check_TS}

In this section, we provide algorithms to check $d$-counterfactual causality for the three distance functions $\dpref^{\AP}$, $\dhamm$, and $\dlev$.
For these algorithms, a maximal execution $\pi$ of the system has to be given. We assume that $\pi$ is a finite path ending in a terminal state.
The problem to find causes that are small or satisfy other desirable properties is not addressed in this paper and remains as future work.
We will briefly come back to this in the conclusions.

%
%

\paragraph*{Prefix distance.}
First, we consider  $\dpref^{\AP}$-counterfactual causality and hence $\dpref$-counterfactual causality as a special case.

\begin{restatable}{theorem}{checkprefAPTS}
\label{thm:check_dpref_TS}
Let $\cT=(S,\sinit,\to,L)$ be a transition system, $E$ a set of terminal states, $C$ a set of states disjoint from $E$, 
and $\Phi=\lozenge E$ or $\Phi=\Box \neg E$.
Let $\pi=s_0 \dots s_n$ be an execution reaching $C$ and satisfying $\Phi$.
It is decidable in polynomial time whether $C$ is a $\dpref^{\AP}$-counterfactual cause for $\Phi$ on $\pi$.
\end{restatable}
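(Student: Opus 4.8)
The plan is to reduce both conditions of Definition~\ref{def:conterfactual_TS} to a handful of polynomial-time graph computations on $\cT$. Throughout, write $\pi=s_0\cdots s_n$ for the given execution and recall that minimal $\dpref^{\AP}$-distance to $\pi$ means \emph{maximal} length of the longest common prefix of the traces. First I would dispose of condition~(1): a maximal path avoiding $C$ exists iff $\sinit$ lies in the set $\mathrm{Ext}$ of states from which, inside the subgraph induced on $S\setminus C$, one can reach either a terminal state of $\cT$ or a cycle (the latter producing an infinite maximal path). This set is computable in polynomial time by backward reachability together with an SCC computation; if $\sinit\notin\mathrm{Ext}$ (in particular if $\sinit\in C$), then $C$ is not a cause and we are done.

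For condition~(2) I would compute, for every prefix length, which states an avoiding path can occupy while still matching the trace of $\pi$. Since labels need not be unique, this calls for a layered subset construction rather than tracking single states: set $P_0=\{\sinit\}$ and $P_{i+1}=\{\,t\notin C \mid \exists r\in P_i,\ (r,t)\in{\to},\ L(t)=L(s_{i+1})\,\}$ for $0\le i<n$. Thus $P_i$ is exactly the set of states reachable at step $i$ by a $C$-avoiding path whose first $i{+}1$ labels agree with those of $\pi$. A maximal $C$-avoiding path with common trace-prefix of length at least $i{+}1$ exists iff $P_i\cap\mathrm{Ext}\neq\emptyset$. Hence the minimal distance is $2^{-(i^*+1)}$ for $i^*\eqdef\max\{\,i\mid P_i\cap\mathrm{Ext}\neq\emptyset\,\}$, which is well defined because $\sinit\in P_0\cap\mathrm{Ext}$. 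The key observation making the remaining check tractable is that, by maximality of $i^*$, \emph{every} maximal $C$-avoiding path has common trace-prefix length at most $i^*{+}1$: if one agreed on index $i^*{+}1$ as well, its state there would witness $P_{i^*+1}\cap\mathrm{Ext}\neq\emptyset$. Consequently the set of \emph{closest} avoiding paths coincides with the maximal $C$-avoiding paths that match $\pi$'s first $i^*{+}1$ labels and then continue arbitrarily; the states such a path can occupy at index $i^*$ are precisely $P_{i^*}\cap\mathrm{Ext}$.

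It then remains to test whether all these closest paths satisfy $\neg\Phi$, which reduces to a single reachability query on $P_{i^*}$. For $\Phi=\lozenge E$ I would precompute the set $\mathrm{ReachE}\subseteq\mathrm{Ext}$ of states from which $E$ is reachable inside $S\setminus C$; a closest path reaching $E$ exists iff $P_{i^*}\cap\mathrm{ReachE}\neq\emptyset$, so $C$ is a cause iff this intersection is empty. For $\Phi=\Box\neg E$ I would instead precompute the set $\mathrm{AvoidE}\subseteq\mathrm{Ext}$ of states from which a maximal path staying in $S\setminus(C\cup E)$ exists; a closest path avoiding $E$ exists iff $P_{i^*}\cap\mathrm{AvoidE}\neq\emptyset$, so $C$ is a cause iff this intersection is empty. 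Here one uses that the matched prefix $r_0,\dots,r_{i^*-1}$ consists of non-terminal states, hence automatically lies outside the terminal set $E$, so only the index-$i^*$ state is constrained by $\mathrm{ReachE}$ or $\mathrm{AvoidE}$. Since $\dpref$ is the special case of $\dpref^{\AP}$ in which all states carry unique labels, the same algorithm settles it.

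Every ingredient---$\mathrm{Ext}$, the layers $P_0,\dots,P_n$, $\mathrm{ReachE}$, $\mathrm{AvoidE}$, and the final intersection test---is computable in polynomial time, giving the claimed bound. The main obstacle I anticipate is conceptual rather than computational: getting the trace-based matching right when labels are not injective (which forces the subset construction above), and rigorously justifying that ``exactly minimal distance'' may be replaced by ``matches the first $i^*{+}1$ labels''. Secondary care is needed to treat path maximality correctly---distinguishing genuine maximal continuations (reaching a terminal state or a cycle within $S\setminus C$) from dead ends where all successors lie in $C$---which is exactly what the cycle/terminal reachability underlying $\mathrm{Ext}$, $\mathrm{ReachE}$, and $\mathrm{AvoidE}$ encodes.
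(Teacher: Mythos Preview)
Your proposal is correct and follows essentially the same approach as the paper: a layered subset construction tracking the states reachable by trace-matching prefixes, determination of the deepest layer from which a $C$-avoiding continuation still exists, and a final CTL/reachability test from that layer. Your treatment is in fact slightly more careful than the paper's sketch in two respects: you build the $C$-avoidance directly into the layers $P_i$ (rather than allowing arbitrary trace-matching paths in $T_j$ and filtering afterwards), and you make explicit via $\mathrm{Ext}$ the distinction between genuine maximal continuations and dead ends whose successors all lie in $C$; both refinements are needed for full correctness when labels are non-injective.
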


\begin{proof}[Proof sketch]
The following algorithm solves the problem in polynomial time:
 First, we determine the last index $i$  s.t.  $C$ is not reached on any path with trace $L(s_0),\dots,L(s_i)$
and s.t. $C$ is avoidable from  some state that is reachable via a path with trace $L(s_0),\dots,L(s_i)$. In order to that,
we recursively construct sets $T_{j+1}$ of states that are reachable via paths with trace $L(s_0),\dots,L(s_{j+1})$
and check for all states $t\in T_{j+1}$ whether  $t\vDash \exists \Box \neg C$. If no such state exists, we have found the first index $j+1$ such that $C$ is not avoidable anymore after trace $L(s_0),\dots,L(s_{j+1})$; so  we have found $i=j$.
Now, we check whether $t \vDash \forall (\Phi \to \lozenge C)$ for all $t\in T_i$. If this is the case, $C$ is a $\dpref^{\AP}$-counterfactual cause for $E$ on $\pi$; otherwise, it is not.
\end{proof}


\paragraph*{Hamming distance.}

The Hamming distance is only defined for words of the same length. We will hence first consider only transition systems in which all maximal paths have the same length. We can think of such transition systems as being structured in layers with indices $1$ to $k$ for some $k$. Transitions can then only move from a state on layer $i<k$ to a state on layer $i+1$.
Afterwards, we consider a simple generalization of the Hamming distance to words of different lengths.

\noindent\textit{Original Hamming distance.}
Let $\cT=(S,\sinit, \to, L)$ be a transition system  in which all maximal paths have the same length $k$. We annotate all states with the layer they are on: For each state $s\in S$, there is a unique length $n\leq k$ of all paths from $\sinit$ to $s$. We will say that state $s$ lies on layer $n$ in this case.
By our assumption that effect states are terminal, the states $E$ are all located on the last layer $k$. We assume furthermore that all effect states have the same labels.

\begin{restatable}{theorem}{checkhammTS}
\label{thm:checking_dhamm}
Let $\cT=(S,\sinit, \to, L)$ be a transition system  in which all maximal paths have the same length $k$. Let $E$ be a set of terminal states and
let $C\subseteq S$ be a set of states disjoint from $E$. 
Let $\Phi=\lozenge E$ or $\Phi=\Box \neg E$.
Let $\pi=s_0 \dots s_n$ be an execution reaching $C$ and satisfying $\Phi$.
It is decidable in polynomial time whether $C$ is a $\dhamm$-counter\-factual cause for $\Phi$ on $\pi$.
\end{restatable}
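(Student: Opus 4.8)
The plan is to reduce the two conditions of Definition \ref{def:conterfactual_TS} to a pair of shortest-path computations in the layered directed acyclic graph underlying $\cT$. Write the given (maximal) path as $\pi = s_0 \dots s_n$ with $n = k$, since all maximal paths have length $k$. Let $d_{\min} = \min\{\dhamm(\pi,\rho) \mid \rho \vDash \Box\neg C\}$ be the least distance from $\pi$ to a $C$-avoiding maximal path, and let $d_{\min}^{\Phi} = \min\{\dhamm(\pi,\rho) \mid \rho \vDash \Box\neg C \text{ and } \rho \vDash \Phi\}$ be the least distance from $\pi$ to a $C$-avoiding maximal path that additionally satisfies $\Phi$, with the convention $\min\emptyset = \infty$. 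Condition (1) holds exactly when $d_{\min} < \infty$, and, assuming this, condition (2) fails exactly when some $C$-avoiding path at distance $d_{\min}$ satisfies $\Phi$, i.e. when $d_{\min}^{\Phi} = d_{\min}$. Hence $C$ is a $\dhamm$-counterfactual cause for $\Phi$ on $\pi$ if and only if $d_{\min} < \infty$ and $d_{\min}^{\Phi} > d_{\min}$, and it remains to compute both quantities in polynomial time.

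The key observation is that, because $\cT$ is layered and the Hamming distance is measured position by position, the distance decomposes over states. Each state $s$ lies on a unique layer $\ell(s)$, so in any maximal path $\rho = t_0 \dots t_k$ the state $t_i$ occupies position $i = \ell(t_i)$ and contributes $1$ to $\dhamm(\pi,\rho)$ precisely when $L(t_i) \neq L(s_i)$. Assigning to every state the cost $c(s) = 0$ if $L(s) = L(s_{\ell(s)})$ and $c(s) = 1$ otherwise, we obtain $\dhamm(\pi,\rho) = \sum_{i=0}^{k} c(t_i)$. Thus computing $d_{\min}$ amounts to finding a minimum-cost maximal path that avoids $C$. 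Deleting the states of $C$ and processing the layers in increasing order, a standard dynamic program computes, for each state $s$, the minimum total cost of a path from $\sinit$ to $s$ avoiding $C$; the minimum of these values over the terminal states on layer $k$ yields $d_{\min}$, and $d_{\min} < \infty$ iff some such path exists. This runs in time linear in the number of transitions.

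To obtain $d_{\min}^{\Phi}$ we rerun the same dynamic program on a graph restricted according to the shape of $\Phi$. If $\Phi = \lozenge E$, a $C$-avoiding maximal path satisfies $\Phi$ exactly when it ends in a state of $E$ (the effect states being terminal and located on layer $k$), so $d_{\min}^{\Phi}$ is the minimum recorded cost over the states of $E$. If $\Phi = \Box\neg E$, a $C$-avoiding maximal path satisfies $\Phi$ exactly when it additionally avoids $E$, so we delete $C \cup E$ and take the minimum recorded cost over the remaining terminal states on layer $k$. In both cases $d_{\min}^{\Phi}$ is computed in polynomial time, and the final test ``$d_{\min} < \infty$ and $d_{\min}^{\Phi} > d_{\min}$'' decides causality.

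The part requiring the most care is the equivalence between the universal quantifier ``all minimal-distance $C$-avoiding paths violate $\Phi$'' of Definition \ref{def:conterfactual_TS} and the numerical comparison $d_{\min}^{\Phi} > d_{\min}$: one must check that $d_{\min}^{\Phi} = d_{\min}$ genuinely exhibits a minimal-distance $C$-avoiding path satisfying $\Phi$, and conversely that no such witness can sit at distance strictly below $d_{\min}$, which holds because $d_{\min}$ is by definition the overall minimum over $C$-avoiding paths. The soundness of the node-cost decomposition, in turn, hinges on the layer assignment being well defined, i.e. on the assumption that every state lies on a unique layer; this is exactly what guarantees that a state's contribution to the Hamming distance is independent of the rest of the path and lets the problem collapse to a shortest-path computation in a directed acyclic graph.
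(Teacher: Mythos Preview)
Your proof is correct and follows essentially the same approach as the paper: assign weight $0$ or $1$ to each state according to whether its label matches that of $\pi$ at the corresponding layer, remove $C$, and compare the shortest-path weight to any terminal state with the shortest-path weight to the terminal states compatible with $\Phi$. You spell out the case $\Phi = \Box\neg E$ and the justification of the equivalence $d_{\min}^{\Phi} > d_{\min}$ in slightly more detail than the paper's sketch, but the underlying idea is identical.
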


\begin{proof}[Proof sketch]
We sketch the proof for the case that $\Phi=\lozenge E$. 
 We  equip the states in $S$ with a weight function $\wgt\colon S\to \{0,1\}$ such that the $\dhamm$-distance of a path to $\pi$ is equal to the accumulated weight of that path. 
 A state $t$ on layer $i$ gets weight $1$ if its label is different to $L(s_i)$. Otherwise, it gets weight $0$.
Now, we can check  whether $C$ is a $\dhamm$-counterfactual cause, as follows: We remove all states in $C$ and compute a shortest (i.e., weight-minimal) path $\zeta$ to $E$ and a shortest path $\xi$ to any terminal state. If the weight of $\xi$ is lower than the weight of $\zeta$, the paths avoiding $C$ that are $\dhamm$-closest to $\pi$ do not reach $E$ and $C$ is a $\dhamm$-counterfactual cause for $\lozenge E$ on $\pi$; otherwise, it is not.
\end{proof}

\begin{remark}\label{rem:weightedHamming}
The Hamming distance between paths could easily be extended to account for different levels of similarities between labels: Given a  similarity metric $d$ on the set of labels, one could define the distance between two paths $\pi=s_1\dots s_k$ and $\rho=t_1\dots  t_k$ as $\dhamm^\prime(\pi,\rho) \eqdef \sum_{i=1}^k d(s_i,t_i)$. 
 The algorithm  in the proof of Theorem \ref{thm:checking_dhamm} can now easily be adapted to this modified Hamming distance by defining the weight function on the transition system in the obvious way.
\end{remark}

\noindent\textit{Generalized Hamming distance.}
The assumption in the previous section that all paths in a transition system have the same length  is quite restrictive.
Hence, we now consider the following generalized version $\dghamm$ of the Hamming distance: For words $w=w_1\dots w_n$ and $v=v_1 \dots v_m$, we define
\[
\dghamm(w,v) \eqdef 
\begin{cases}
\dhamm(w,v_{[1:n]}) + (m{-}n) & \text{if $n\leq m$,} \\
\dhamm(w_{[1:m]},v) + (n{-}m)& \text{otherwise.} 
\end{cases}
\]
So $\dghamm$ takes a prefix of the longer word of the same length as the shorter word, computes the Hamming distance of the prefix and the shorter word, and adds the difference in length of the two words.

\begin{restatable}{theorem}{checkghammTS}
Let $\cT=(S,\sinit,\to,L)$ be a transition system, $E$ a set of terminal states, and $C$ a set of states disjoint from $E$.
Let $\Phi=\lozenge E$ or $\Phi=\Box \neg E$.
Let $\pi=s_0 \dots s_n$ be an execution reaching $C$ and satisfying $\Phi$.
It is decidable in polynomial time whether $C$ is a $\dghamm$-counter\-factual cause for $\Phi$ on $\pi$.
\end{restatable}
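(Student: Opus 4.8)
The plan is to reduce $\dghamm$-counterfactual causality to shortest-path computations in an auxiliary nonnegatively weighted graph, in the spirit of the proof of Theorem~\ref{thm:checking_dhamm}, but with two extra ingredients that capture the features of $\dghamm$. The decisive difference to the equal-length setting is that in a general $\cT$ one and the same state may be reached at several different positions along different paths, while the Hamming comparison against the fixed reference trace $L(\pi)$ is position-sensitive. I would therefore track the position explicitly by taking a product of $\cT$ with a counter: the states are pairs $(s,i)$ with $s\in S$ and $i\in\{0,\dots,n\}$, together with overflow states $(s,\bot)$ standing for positions beyond $n$. A state $(s,i)$ receives weight $1$ if $L(s)\neq L(s_i)$ and $0$ otherwise, while every overflow state receives weight $1$; edges follow $\to$ and increment the counter, i.e. $(s,i)\to(s',i{+}1)$ for $i<n$, $(s,n)\to(s',\bot)$ and $(s,\bot)\to(s',\bot)$ whenever $s\to s'$. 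Paths through $C$ are excluded simply by deleting every state whose $S$-component lies in $C$.

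I would then check that the accumulated weight along the image of a $C$-avoiding path $\rho=t_0\cdots t_m$ reproduces both summands of $\dghamm(L(\rho),L(\pi))$: over the overlapping positions $0,\dots,\min(m,n)$ the state weights add up exactly to the Hamming distance on this common prefix, and for a long path ($m>n$) the $m-n$ overflow states each contribute $1$, giving the term $m-n$. The one case that does not fit this plain accumulated-weight template is a short path that terminates at a position $m<n$, where the penalty $n-m$ must still be charged although $\rho$ has already stopped; getting this asymmetric length penalty right is the main obstacle. I would handle it by charging it at the endpoint: the cost of a maximal path whose image ends in a terminal product state $(t,i)$ is its shortest accumulated weight plus $n-i$ when $i<n$, and plus $0$ when it terminates at position $n$ or in the overflow region. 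With this convention the total cost of every finite $C$-avoiding maximal path equals its $\dghamm$-distance to $\pi$. All weights are nonnegative and of polynomial magnitude, the graph has size $O(|S|\cdot n)$, and the counter strictly increases outside the overflow region while overflow states carry positive weight, so single-source shortest distances from $(\sinit,0)$ are well defined and computable in polynomial time by Dijkstra's algorithm.

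Finally, I would read off the causality criterion as in Theorem~\ref{thm:checking_dhamm}. Writing $d_E$ for the minimal endpoint cost over terminal product states with $S$-component in $E$, $d_{\bar E}$ for the minimal endpoint cost over the remaining terminal product states, and $d_{\min}=\min(d_E,d_{\bar E})$, these values describe the $\dghamm$-closest $C$-avoiding maximal paths, because infinite paths are infinitely far from the finite $\pi$ and hence never minimal once a finite $C$-avoiding maximal path exists. For $\Phi=\lozenge E$, the set $C$ is a $\dghamm$-counterfactual cause iff no closest path reaches $E$, that is iff $d_E>d_{\min}$; for $\Phi=\Box\neg E$ it is a cause, dually, iff every closest path reaches $E$, that is iff $d_{\bar E}>d_{\min}$. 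The degenerate cases are settled by a single reachability check: if no $C$-avoiding maximal path exists at all then the first condition of Definition~\ref{def:conterfactual_TS} fails and $C$ is not a cause, and if such paths exist but none is finite (so $d_{\min}=\infty$) then all closest paths are infinite and satisfy $\Box\neg E$, making $C$ a cause for $\lozenge E$ and not a cause for $\Box\neg E$. Since every step runs in polynomial time, this proves the claim.
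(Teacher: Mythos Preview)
Your proposal is correct and follows essentially the same approach as the paper: a layered product of $\cT$ with a position counter, weights $0/1$ according to whether the label matches $L(s_i)$, an overflow region with weight $1$ per step for paths longer than $\pi$, and a length penalty $n-i$ charged at terminal states in copy $i<n$ for shorter paths, after which causality is read off from shortest-path values to $E$ versus to the remaining terminals. The only cosmetic differences are that the paper encodes the short-path penalty as a weighted transition of weight $|\pi|-i$ into the last copy (rather than as an endpoint offset) and realises the overflow region by allowing weight-$1$ transitions within the last copy (rather than a separate $\bot$ layer); your explicit treatment of the infinite-path corner case is a welcome addition that the paper's sketch leaves implicit.
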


\begin{proof}[Proof sketch]
We adapt the proof of Theorem \ref{thm:checking_dhamm}:
We take $|\pi|$-many copies of the state space $S$ an let transitions lead from one copy to the next.
In the $i$th copy states with the same label as $s_i$ get weight $0$ and all other states get weight $1$.
Furthermore, we add transitions with weight $|\pi|-i$ from terminal states in a copy $i<|\pi|$ to the same state in the last copy to account for path that are shorter than $\pi$. 
The weight $|\pi|-i$ corresponds to the value added in the generalized Hamming distance when paths of different length are compared.
To account for paths longer than $\pi$, we furthermore allow transitions with weight $1$ within the last copy. These transitions are then taken until a terminal state is reached.
With these adaptations, the proof can be carried out analogously to  the proof of Theorem \ref{thm:checking_dhamm}.
\end{proof}

\paragraph*{Levenshtein distance.}
 The idea to check $\dlev$-counterfactual causality is to construct a weighted transition system to check causality via the computation of shortest paths as for the Hamming distance. 
So, let $\cT=(S,\to,\sinit,L)$ be a transition system labeled by $L$ with symbols from   $\Sigma=2^{\AP}$.
Let $E$ be a set of terminal  states and $C$ a set of states disjoint from $E$.
Let $\Phi=\lozenge E$ or $\Phi=\Box \neg E$.
Let $\pi=s_1\dots s_n$ be a maximal path reaching $C$ and satisfying $\Phi$.
The  transition system we construct  contains transitions corresponding directly to the edit operations insertion, deletion and substitution. A path in the constructed transition system  then corresponds
to an edit sequence between the trace of $\pi$ and the trace of another path in $\cT$.
This construction  shares some similarities with the construction of Levenshtein automata \cite{schulz2002fast} that accept all words with a Levenshtein distance below a given constant $c$ from a fixed word $w$.

Now, we formally construct the new weighted transition system $\cT_{\dlev}^\pi$:
The state space of this transition system is $S\times \{1,\dots, n\}$ with the initial state $(\sinit,1)$. The labeling function  is not used.
In $\cT_{\dlev}^\pi$, we allow the following transitions  labelled with letters from the edit alphabet~$\Gamma$:
\begin{enumerate}
\item
 a transition from $(s,i)$ to $(t,i+1)$ labeled with $(L(s_{i+1}),L(t))$ for each $(s,t)\in \to$ and $i<n$, 
\item
 a transition from $(s,i)$ to $(t,i)$ labeled with $(\varepsilon,L(t))$ for each $(s,t)\in \to$ and $i\leq n$,
\item 
 a transition from $(s,i)$ to $(s,i+1)$ labeled with $(L(s_{i+1}), \varepsilon)$ for each $s\in S$ and $i<n$.
\end{enumerate}
Note that the terminal states in $\cT_{\dlev}^\pi$ are all contained in $S\times\{n\}$.
Any maximal path in $\cT_{\dlev}^\pi$ corresponds to a maximal path $\rho$ in $\cT$.
This path $\rho$ is obtained by moving from a state $s$ to a state $t$ in $\cT$ whenever a corresponding transition of type 1 or 2 is taken in $\cT_{\dlev}^\pi$.
Transitions of type 3 do not correspond to a step in $\cT$ and stay in the same state.

Furthermore, given a finite path $\tau$  in $\cT_{\dlev}^\pi$ and the corresponding path $\rho=t_1\dots t_k$  in $\cT$, the labels of the transitions of $\tau$ form an edit sequence
for the words $L(s_2)\dots L(s_n)$ and $L(t_2)\dots L(t_k)$. 
To see this, observe that, for each $i>1$, whenever the copy $S\times\{i\}$ is entered in $\cT_{\dlev}^\pi$, the label of the transition contains $L(s_i)$ in the first component; if a transition stays in a copy $S\times\{i\}$, the label contains $\varepsilon$ in the first component. So, the projection onto the first component of the labels of the transitions of $\tau$  is indeed $L(s_2)\dots L(s_n)$, potentially with $\varepsilon$s in between. In the second component, whenever a transition of type 1 or 2 is taken, the label is simply the label of the corresponding state in $\rho$. Transitions of type 3 have $\varepsilon$ in the second component of their label.
Note here that $\rho$ and $\pi$ both start in $\sinit$ and that we could hence add $(L(\sinit),L(\sinit))$ to the beginning of the  edit sequence to obtain an edit sequence for the full traces of $\tau$ and $\rho$.
Note that also for infinite paths $\tau=t_1t_2\dots$  in $\cT_{\dlev}^\pi$ the transition labels provide an edit sequence for the words
$L(s_2)\dots L(s_n)$ and $L(t_2)L(t_3)\dots$.
Vice versa,  a finite maximal path $\rho=t_1\dots t_k$  in $\cT$ together with an edit sequence $\gamma$ for $L(s_2)\dots L(s_n)$ and $L(t_2)\dots L(t_k)$  provides  a maximal path 
$\tau$ in 
$\cT_{\dlev}^\pi$: The occurrences of $\varepsilon$ in $\gamma$ dictate which type of transition to take while the path $\rho$ tells us which state to move to.
As $\gamma$ projected to the first component contains $L(s_2)\dots L(s_n)$ enriched with $\varepsilon$s exactly $n-1$ transitions of type 1 or 3 are taken in $\tau$ obtained in this way and we indeed reach the last copy $\cT\times\{n\}$. As $\rho$ ends in a terminal state $t_k$, we furthermore reach the terminal state $(t_k,n)$. 
Analogously, an infinite path $\rho$  in $\cT$ together with an edit sequence $\gamma$ for $L(\pi)$ and $L(\rho)$ yields an infinite path in $\cT_{\dlev}^\pi$.

Based on these observations, we equip $\cT_{\dlev}^\pi$ with a weight function $\wgt$ on transitions: Transitions labeled with $(\sigma,\sigma)$ for a $\sigma\in \Sigma$
get weight $0$, the remaining transitions get weight $1$.

\begin{restatable}{theorem}{checklevTS}
	\label{thm:checking_Levenstein}
Let $\cT=(S,\sinit,\to,L)$ be a transition system, $E$ a set of terminal states, and $C$ a set of states disjoint from $E$.
Let $\Phi=\lozenge E$ or $\Phi=\Box \neg E$.
Let $\pi=s_0 \dots s_n$ be an execution reaching $C$ and satisfying $\Phi$.
It is decidable in polynomial time whether $C$ is a $\dlev$-counterfactual cause for $\Phi$ on $\pi$.
\end{restatable}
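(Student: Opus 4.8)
The plan is to mirror the shortest-path argument from the proof of Theorem~\ref{thm:checking_dhamm}, but to carry it out on the weighted transition system $\cT_{\dlev}^\pi$ constructed above rather than on a layered copy of $\cT$. The fact I would establish first is the key lemma that, after deleting every state $(s,i)$ with $s\in C$ from $\cT_{\dlev}^\pi$, the weight of a weight-minimal path from $(\sinit,1)$ to a terminal state $(t,n)$ equals $\min\{\dlev(L(\pi),L(\rho)) \mid \rho \text{ a finite maximal path of } \cT \text{ with } \rho\vDash\Box\neg C\}$. This follows from the correspondence already spelled out above: every finite path $\tau$ reaching a terminal in $\cT_{\dlev}^\pi$ induces a pair consisting of a finite maximal $C$-avoiding path $\rho$ of $\cT$ and an edit sequence for $L(\pi)$ and $L(\rho)$ of weight exactly $\wgt(\tau)$, and conversely every such pair is realised by some $\tau$. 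Minimising $\wgt(\tau)$ therefore amounts to minimising first over edit sequences (which yields $\dlev(L(\pi),L(\rho))$ by definition, using that $\pi$ and $\rho$ share the initial symbol $L(\sinit)$ so that it is optimal to align it at no cost) and then over the admissible $\rho$. Deleting the $C$-states exactly restricts attention to paths $\rho$ with $\rho\vDash\Box\neg C$.

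Granting this lemma, I would decide causality by shortest-path computations in the pruned $\cT_{\dlev}^\pi$: the minimal weight $d_E$ to a terminal $(t,n)$ with $t\in E$, the minimal weight $d_{\neg E}$ to a terminal $(t,n)$ with $t\notin E$, and $d_{\min}=\min(d_E,d_{\neg E})$. Since all edge weights lie in $\{0,1\}$ and $\cT_{\dlev}^\pi$ has $|S|\cdot n$ states and polynomially many edges, each is a standard non-negative shortest-path problem solvable in polynomial time; cycles created by insertion transitions within a single copy carry strictly positive weight and are never used by a shortest path. For $\Phi=\lozenge E$ the counterfactual condition of Definition~\ref{def:conterfactual_TS} holds iff $d_{\min}<d_E$: the closest $C$-avoiding path then strictly beats any $C$-avoiding path reaching $E$, so every minimal-distance $C$-avoiding path satisfies $\Box\neg E=\neg\Phi$. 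For $\Phi=\Box\neg E$ it holds iff $d_E<d_{\neg E}$: every minimal-distance $C$-avoiding path then reaches $E$, i.e.\ satisfies $\lozenge E=\neg\Phi$. Condition~1 of Definition~\ref{def:conterfactual_TS} is handled by first testing $\sinit\vDash\exists\Box\neg C$; if this fails, $C$ is not a cause.

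The step I expect to require the most care is making the lemma fully rigorous, in particular checking that restricting to finite terminal-reaching paths of $\cT_{\dlev}^\pi$ loses no relevant competitor. Because $\pi$ is finite of length $n$, any edit sequence relating $L(\pi)$ to an infinite trace needs infinitely many insertions and so has infinite weight; hence every infinite maximal path of $\cT$ lies at distance $\infty$ from $\pi$. Consequently, as soon as a finite $C$-avoiding path exists, the minimal distance is finite and witnessed by finite paths alone, and the three computations above see all relevant paths. The only degenerate situation is where $C$ can be avoided solely along infinite paths: there $d_{\min}=\infty$ and the minimal-distance $C$-avoiding paths are exactly the infinite $C$-avoiding paths, which never reach the terminal set $E$; this makes $C$ a cause precisely when $\Phi=\lozenge E$ and not a cause when $\Phi=\Box\neg E$. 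This case is detected by combining the CTL check $\sinit\vDash\exists\Box\neg C$ with a reachability test for finite $C$-avoiding paths. Once these boundary cases are folded into the comparison, correctness and the polynomial-time bound follow exactly as in Theorem~\ref{thm:checking_dhamm}.
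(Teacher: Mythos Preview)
Your proposal is correct and follows essentially the same approach as the paper: the paper's proof sketch simply invokes the construction of $\cT_{\dlev}^\pi$ and then reduces to shortest-path computations as in Theorem~\ref{thm:checking_dhamm}, which is exactly the route you take. Your write-up is in fact more detailed than the paper's sketch, and your handling of infinite $C$-avoiding paths and of the initial-symbol alignment makes explicit points the paper leaves implicit.
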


\begin{proof}[Proof sketch]
With the construction of the weighted transition system $\cT_{\dlev}^\pi$ above, the check can be done via the computation of shortest paths as for the Hamming distance above.
\end{proof}

\subsection{Relation to Halpern-Pearl causality}
\label{sec:HP-causality}

In the sequel, we want to demonstrate how our definition of counterfactual causality relates to Halpern-Pearl-style definitions of causality in \emph{structural equation models} \cite{HalpernP04,HalpernP05,Halpern15}.
A structural equation model consists of variables $X_1,\dots, X_n$ with finite domains   that are governed by   equations 
$
X_{i}=f_i(X_1,\dots, X_{i-1}, C)
$
for all $i\leq n$. Here, $f_i$ is an arbitrary function for each $i$ and $C$ is an input parameter for the context. For our consideration, the context $C$ does not play a role and we will hence omit it in the sequel.
So,  the value of variable $X_{i}$ depends on the value of (some of) the variables with lower index and the dependency is captured by the function $f_i$.
 Halpern and Pearl  use  \emph{interventions}
to define causality for an effect $E$, which is a set of valuations of $X_1, \dots, X_n$. An intervention puts the value of a variable $X_{i}$ to some $\alpha$ that is different from $f_i(X_1,\dots,X_{i-1})$, i.e., disregarding the equation $f_i$. Afterwards, the subsequent variables are evaluated as usual or by further interventions. Halpern and Pearl define:
\begin{definition}
Let  $f_1,\dots, f_n$ over variables $X_1,\dots, X_n$  be a structural equation model as above and let $E$ be an effect set of valuations such that the valuation of $X_1,\dots, X_n$ obtained by the structural equation model belongs to $E$. A \emph{but-for-cause} is a minimal subset $X\subseteq\{X_1,\dots, X_n\}$ with the following property:
There are values $\alpha_x$ for $x\in X$ such that putting variables $x\in X$ to $\alpha_x$ by intervention leads to a valuation of $X_1,\dots,X_n$ not exhibiting the effect $E$. More precisely, letting $t_i$ be the valuation $[X_1=w_1,\dots,X_{i-1}=w_{i-1}]$, where
$
w_i =
 f_{i}(w_1,\dots,w_{i-1})$   
 if $X_i\not\in X$, and 
$w_i = \alpha_{X_i}$ if $X_i\in X$, 
we get that $t_{n+1}\not\in E$.
\end{definition}

In order to compare this to our notion of counterfactual causes, we view structural equation models as tree-like transition system $\cT$:
The nodes at level $i$ are valuations for the variables $X_1,\dots, X_{i-1}$. At each node $s$ at level $i$, two actions are available: The action $\mathsf{default}$ moves to the state on level $i+1$ where the valuation in $s$ is extended by setting $X_i$ to the value $f_i(X_1,\dots,X_{i-1})$ where the values for 
$X_1,\dots,X_{i-1}$ are taken from the valuation in $s$. The action $\mathsf{intervention}$ extends the valuation of $s$ by setting $X_i$ to any other value than the action $\mathsf{default}$. 
The labelling in $\cT$ assigns the label $\{\mathsf{intervention}\}$ to all states that are reached by the action $\mathsf{intervention}$. The remaining states  and the initial state with the empty valuation get the label $\emptyset$.
Given an effect $E$ as a set of valuations, we interpret this as the corresponding set of leaf states in $\cT$. 
The default path $\pi$ that always chooses the action $\mathsf{default}$ corresponds to evaluating the equations in the structural equation model without interventions. 
 We can now capture but-for-causality with $\dhamm$-counterfactual causality along the default path $\pi$ if all variables  are Boolean:

\begin{restatable}{proposition}{butforcausality}
\label{prop:butfor}
Let  $f_1,\dots, f_n$ over Boolean variables $X_1,\dots, X_n$  be a structural equation model and let $E$ be an effect set of valuations.
Let $X$ be a but-for-cause for $E$. Let 
$C_X$ be the set of all nodes in the transition system $\cT$ which are reached by a $\mathsf{default}$-transition for a variable $x\in X$.
Then, $C_X$ is a $\dhamm$-counterfactual cause for $E$ in $\cT$ on the default path $\pi$.
\end{restatable}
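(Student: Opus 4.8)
The plan is to unfold the two conditions of Definition~\ref{def:conterfactual_TS} for a $\dhamm$-counterfactual cause and to match them with the but-for property of $X$. The starting observation is that, since the default path $\pi$ always takes $\mathsf{default}$, every state on $\pi$ carries the label $\emptyset$, and all maximal paths of the tree-like system $\cT$ have the same length $n{+}1$ (so the Hamming distance is well-defined). Consequently, for any maximal path $\rho$, the quantity $\dhamm(\pi,\rho)$ equals the number of states on $\rho$ that carry the label $\{\mathsf{intervention}\}$, i.e., the number of variables on which $\rho$ chooses the action $\mathsf{intervention}$. This turns the metric condition of the definition into a purely combinatorial count of interventions.

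Next I would characterize the paths avoiding $C_X$. By construction, the node at level $i{+}1$ reached by setting $X_i$ lies in $C_X$ exactly when $X_i\in X$ and this setting was a $\mathsf{default}$-transition. Hence $\rho\vDash\Box\neg C_X$ holds iff $\rho$ chooses $\mathsf{intervention}$ for \emph{every} variable in $X$ (while remaining free on variables outside $X$). Combining this with the distance observation, every path avoiding $C_X$ intervenes on at least the $|X|$ variables in $X$, so $\dhamm(\pi,\rho)\geq|X|$, with the minimum $|X|$ attained precisely by paths that intervene on exactly the variables of $X$ and take $\mathsf{default}$ on all others. Any such path exists and avoids $C_X$, which already establishes Condition~1.

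The crux is Condition~2, where I would use that the variables are Boolean. Intervening on $X_i$ forces $X_i$ to the unique value $1-f_i(\dots)$ different from the default, the arguments of $f_i$ being fixed by the run so far. Therefore, once a minimal-distance path is required to intervene on exactly $X$ and to default elsewhere, the entire run is determined: there is a \emph{unique} path $\rho^\ast$ of distance $|X|$ that avoids $C_X$. By the recursive definition of the but-for valuation (with $w_i=\alpha_{X_i}=1-f_i(w_1,\dots,w_{i-1})$ for $X_i\in X$ and $w_i=f_i(\dots)$ otherwise), $\rho^\ast$ is exactly the run realizing the but-for intervention, whose leaf is the valuation $t_{n+1}\notin E$. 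Thus the unique minimal-distance path avoiding $C_X$ does not reach $E$, which gives Condition~2 and hence that $C_X$ is a $\dhamm$-counterfactual cause for $\lozenge E$ on $\pi$.

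I expect the main obstacle to lie in the bookkeeping around this Boolean-forcing step: one must argue carefully that the minimal-distance constraint rules out \emph{any} intervention outside $X$, and that in the Boolean setting fixing the set of intervened variables to exactly $X$ leaves no remaining freedom, so that the minimal-distance witness necessarily coincides with the one specific but-for run whose leaf is guaranteed to avoid $E$. It is worth noting that minimality of $X$ from the but-for-cause definition is not actually needed for this direction; only the existence of an intervention on $X$ that avoids $E$ is used, together with the uniqueness forced by Boolean domains.
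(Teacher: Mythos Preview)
Your overall strategy is the natural one: identify $\dhamm(\pi,\rho)$ with the number of interventions along $\rho$, characterize paths avoiding $C_X$ as those that intervene on every variable of $X$, and use Booleanness to pin down a unique minimal-distance witness $\rho^\ast$. The paper defers the actual proof to its extended version, so a line-by-line comparison is not possible, but your outline is exactly what one would expect and the argument for Conditions~1 and~2 is essentially correct.

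There is, however, a genuine gap precisely at the point where you claim minimality is dispensable. You write $\alpha_{X_i}=1-f_i(w_1,\dots,w_{i-1})$ in order to identify $\rho^\ast$ with the but-for run, but the but-for definition only guarantees the \emph{existence} of some tuple $(\alpha_x)_{x\in X}$ avoiding $E$; nothing in the existence clause forces $\alpha_{X_i}\neq f_i(w_1,\dots,w_{i-1})$. Minimality is exactly what closes this gap: if $\alpha_{X_i}=f_i(w_1,\dots,w_{i-1})$ for some $X_i\in X$, the same run arises from $X\setminus\{X_i\}$, contradicting minimality; hence in the Boolean case every $\alpha_{X_i}$ equals $1-f_i(w_1,\dots,w_{i-1})$ and $\rho^\ast$ does coincide with the but-for run. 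Without minimality the conclusion is false. Take $n=2$, $f_1()=0$, $f_2(X_1)=X_1$, $E=\{(0,0),(1,0)\}$, and $X=\{X_1,X_2\}$. The assignment $\alpha_{X_1}=\alpha_{X_2}=1$ witnesses the existence clause (leaf $(1,1)\notin E$), yet the unique path $\rho^\ast$ that intervenes on exactly $X$ sets $X_1=1$ and then $X_2=1-f_2(1)=0$, reaching $(1,0)\in E$; so $C_X$ is not a $\dhamm$-counterfactual cause here. Your main argument stands once you explicitly invoke minimality to justify $\alpha_{X_i}=1-f_i(w_1,\dots,w_{i-1})$; the closing remark that minimality is not needed should be removed.
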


For non-Boolean variables, the definitions of but-for-causes and of $\dhamm$-counterfactual causes have one significant difference:
A but-for-cause $X$ merely requires the existence of values to assign to the variables in $X$ by $\mathsf{intervention}$ such that the effect is avoided. A $\dhamm$-counterfactual cause $C$ in $\cT$  requires that for all possible interventions on the variables in $X$, the effect is avoided. This universal quantification originates from the universal 
quantification over most similar worlds in the Stalnaker-Lewis semantics of counterfactual causality.

The minimality requirement of but-for-causes  does not have a counterpart in the  definition of $d$-counter\-factual causes.
This allows us to assert that a candidate set of states $C$ is a $d$-counterfactual cause for an effect even if it contains redundancies.
When trying to find $d$-counterfactual causes for a given effect, on the other hand, of course trying to find (cardinality-)minimal causes is a reasonable option.

Besides but-for causality, 
we can also capture actual causality as in \cite{Halpern15} in our framework in the case of Boolean structural equation models.
This is demonstrated in the extended version \cite{extended}.

\section{Counterfactual causality in reachability games}

The counterfactual notion of causality introduced and investigated in the previous section can be applied to reachability games $\game$: 
We take the perspective of a player $\Pi$. 
Given a strategy $\sigma$ for the 
opponent and a play in which $\Pi$ lost, we  apply the definition to the transition system obtained from $\sigma$ and $\game$ and the given play. 
This allows us to analyze whether avoiding a certain set of states while playing against strategy $\sigma$ as similarly as possible to the given play would have allowed $\Pi$ to win.
Depending on whether we take the perspective of $\ReachPl$ or $\SafePl$, the effect that the player loses the game is a safety or reachability property, which we considered as effects in transition systems.
 The need to be given a strategy for the opponent, however, constitutes a major restriction to the usefulness of this approach.
 All proofs omitted in this section can be found in the extended version \cite{extended}.

\subsection{$D$-counterfactual causality}
\label{sec:counterfactual_games}

We provide a definition of counterfactual causality in reachability games in the sequel in which we only need the strategy $\sigma$ with which the player $\Pi$  played and are interested in  why the strategy $\sigma$  allows the opponent to win the game.
Since both players have optimal MD-strategies in a reachability game, we restrict ourselves to 
MD-strategies in the definition. 

\begin{definition}
	\label{def:game_conterfactual-cause}
Let $\game$ be a reachability game with target set $\LocsT$. Let $\Pi$ be one of the two players  and let $\sigma$ be a MD-strategy for player $\Pi$. Let $C$ be a set of locations disjoint from $\LocsT$. Let $D$ be a distance function on MD-strategies.
We say that $C$ is a $D$-counterfactual cause for the fact that $\Pi$ loses using $\sigma$  if 
\begin{enumerate}
\item there are $\sigma$-plays that reach $C$ on which $\Pi$ loses,
\item there is an MD-strategy $\tau$ for player $\Pi$ that avoids $C$ (i.e., there is no $\tau$-play reaching $C$),
\item  all MD-strategies $\tau$ for player $\Pi$, that avoid $C$ and that have minimal $D$-distance to $\sigma$ among the strategies avoiding $C$, are winning for $\Pi$.
\end{enumerate} 
\end{definition}

If we take the perspective of player $\Pi$ in game $\game$ where the opponent $\bar\Pi$ does not control any locations, MD-strategies for $\Pi$ satisfying condition 1 of the definition are essentially simple paths satisfying a safety or reachability effect property (with additional information on the states that are not visited by the path). To some extent, the definition can now be seen as a generalization of the definition for transition systems for suitable distance functions $D$:
We say a strategy distance function $D$  \emph{generalizes a path distance function} $d$ if in games where $\bar\Pi$ does not control any location, for all strategies $\sigma, \tau$ for $\Pi$, we have  $D(\sigma,\tau) = d(\pi_{\sigma},\pi_\tau)$ where $\pi_\sigma$ and $\pi_\tau$ are the unique $\sigma$- and $\tau$-plays.
The definition that $C$ is a  $D$-counterfactual cause for $\sigma$ losing the game agrees with the definition that $C$ is a  $d$-counterfactual cause on $\pi_{\sigma}$ for 
$\lozenge \LocsT$ or $\Box \neg \LocsT$ 
 in acyclic games  in this case. In cyclic games, there is one caveat: The definition for games quantifies only over MD-strategies which induce a play that is a simple path or simple lasso. The definition for transition systems quantifies over more complicated paths as well.

\paragraph*{Hausdorff distance $\dpref^H$ based on the prefix metric $\dpref$.}
A way to obtain a strategy distance function generalizing a given path distance function is the use of the Hausdorff distance on the set of plays of the strategies \cite[Section~6.2.2]{DelfourZ-11}:
	Let \winstrategy and \loosestrategy be two MD-strategies, and 
	$d$ be a distance function over plays. 
	The Hausdorff distance $d^H$ based on $d$  is defined by 
\[	{d^H}(\loosestrategy, \winstrategy) = 
	\max \left\{ \sup_{ \text{ $\loosestrategy$-plays }\looseplay} \,
	\inf_{ \text{ $\winstrategy$-plays }\winplay} \,
	d(\looseplay, \winplay),  
	\sup_{ \text{  $\winstrategy$-plays }\winplay} \,
	 \inf_{ \text{  $\loosestrategy$-plays } \looseplay} \,
	d(\looseplay, \winplay) \right\}.
	\]
	Let us consider the Hausdorff distance $\dpref^H$ based on the prefix metric $\dpref$ assuming that all states have a unique label.
For two strategies $\sigma$ and $\tau$ for $\SafePl$, the distance $\dpref^H(\sigma,\tau)$ 
is $2^{-n}$ where $n$ is the least natural number such that there is a prefix of length $n$ of a $\tau$-play that is not a prefix of a $\sigma$-play, or vice versa.
In order to find strategies that are as similar as possible to a given strategy $\sigma$, we hence have to consider strategies that follow $\sigma$ for as many steps as possible.
This leads to an algorithm for checking $\dpref^H$-counterfactual causality in reachability games that shares some similarities with the algorithm for checking $\dpref^{\AP}$-counterfactual causality in transition systems. 

\begin{restatable}{theorem}{checkprefgames}
	\label{thm:checking-pref-game}
Let $\game=(V,\locinit,\Trans)$ where $V=\LocsReach\uplus\LocsSafe\uplus\LocsT$ be a reachability game with target set $\LocsT$ and $\sigma$ a MD-strategy for player $\Pi$. Let $C$ be a set of locations disjoint from $\LocsT$.
We can check in polynomial time whether $C$ is a $\dpref^H$-counterfactual cause for the fact that $\Pi$ is losing using $\sigma$  in $\game$.
\end{restatable}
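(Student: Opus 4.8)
The plan is to decide the three conditions of Definition~\ref{def:game_conterfactual-cause} one at a time, each in polynomial time. The key auxiliary object is the winning region $W\subseteq\Locs$ of $\Pi$ in the safety game with objective $\Box\neg C$, i.e.\ the set of locations from which $\Pi$ can guarantee that $C$ is never visited; $W$ is computable in polynomial time via the opponent's $\lozenge C$-attractor. An MD-strategy for $\Pi$ avoids $C$ precisely when it keeps all its plays inside $W$, and $W$ satisfies the familiar closure properties ($W\cap C=\emptyset$, every $\Pi$-location in $W$ has a successor in $W$, and every opponent location in $W$ has all successors in $W$). Condition~2 then amounts to $\locinit\in W$, and Condition~1 is a polynomial reachability query (expressible in CTL) in the structure $\game^{\sigma}$ asking for a play that reaches $C$ and then exhibits the losing event of $\Pi$ --- reaching $\LocsT$ if $\Pi=\SafePl$, avoiding $\LocsT$ forever if $\Pi=\ReachPl$. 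Both are decidable in polynomial time.

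For Condition~3 I first pin down the minimal $\dpref^H$-distance achievable by a $C$-avoiding strategy. By the description of $\dpref^H$ recalled above, $\dpref^H(\sigma,\tau)$ is governed by the \emph{earliest} position at which the prefix-closed sets of $\sigma$- and $\tau$-plays diverge: whenever $\tau$ chooses $\tau(v)\neq\sigma(v)$ at a $\Pi$-location $v$, the $\sigma$-play through $\sigma(v)$ yields a prefix no $\tau$-play extends, and symmetrically, so the divergence is felt at the depth of $v$ in both directions of the Hausdorff supremum. A $C$-avoiding $\tau$ is \emph{forced} to leave $\sigma$ at any reachable $\Pi$-location $v$ with $\sigma(v)\notin W$, and may coincide with $\sigma$ everywhere strictly before that. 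Thus I run a breadth-first search in the graph $G$ in which $\Pi$-locations retain only the edge to $\sigma(v)$ (when $\sigma(v)\in W$) and opponent locations retain all $W$-successors, and let $d$ be the least BFS-depth of a $\Pi$-location $v$ with $\sigma(v)\notin W$; such a $v$ exists whenever Condition~1 holds, and $d\le|\Locs|$. Writing $P$ for the set of $\Pi$-locations of BFS-depth $<d$, the minimal-distance $C$-avoiding strategies are exactly the MD-strategies that stay in $W$ and agree with $\sigma$ on $P$.

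It remains to decide whether all of these strategies are winning, equivalently whether \emph{some} play consistent with such a strategy is losing for $\Pi$. I capture these plays as the maximal paths of a polynomial-size graph $H$ on the vertex set $W$: every $\Pi$-location in $P$ keeps only its $\sigma$-edge, every $\Pi$-location in $W\setminus P$ offers all its $W$-successors (an arbitrary $C$-avoiding choice, including the forced deviation at the nearest bad location), and every opponent location offers all its $W$-successors. Because the pinning is done per location, $H$ respects the memoryless nature of the candidate strategies. Condition~3 then holds iff $H$ has no losing maximal path from $\locinit$: for $\Pi=\SafePl$ this is the query ``$\LocsT$ is unreachable in $H$'', and for $\Pi=\ReachPl$ it is ``every maximal path of $H$ reaches $\LocsT$'', i.e.\ $H$ has neither a reachable cycle avoiding $\LocsT$ nor a reachable terminal location outside $\LocsT$. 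A losing path can always be taken to be a simple path (to $\LocsT$) or a simple lasso (avoiding $\LocsT$), hence is realized by a genuine $C$-avoiding MD-strategy agreeing with $\sigma$ on $P$, so the reduction is faithful and the whole procedure runs in polynomial time.

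The main obstacle is the argument in the second paragraph: showing that the Hausdorff lifting $\dpref^H$ collapses to the earliest forced divergence and that the minimal-distance $C$-avoiding strategies are \emph{exactly} the MD-strategies staying in $W$ and agreeing with $\sigma$ on $P$. This needs a careful unfolding of the $\sup$--$\inf$ definition in both directions, the observation that following $\sigma$ at the \emph{good} frontier locations (those at depth $d$ with $\sigma(v)\in W$) neither increases nor decreases the distance, and --- the most delicate point --- a correct treatment of cyclic arenas, where plays are lassos and a location pinned by $P$ may be revisited in the free part, which is exactly why the reduction pins $\sigma$ by location rather than by depth.
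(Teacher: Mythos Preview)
Your proposal is correct and follows essentially the same approach the paper indicates: compute the safety winning region $W$ for $\Box\neg C$, locate the least depth at which a $C$-avoiding strategy is forced to deviate from $\sigma$ (the game analogue of the ``last index $i$'' in the $\dpref^{\AP}$ algorithm of Theorem~\ref{thm:check_dpref_TS}), freeze $\sigma$'s choices strictly before that depth, and then test a universal property on the resulting restricted structure. Your write-up is in fact more explicit than what the paper gives in the main text (the full proof is deferred to the extended version), and your identification of the delicate points---that good frontier locations at depth $d$ may be treated freely without affecting $\dpref^H$, and that pinning must be by \emph{location} rather than by depth in cyclic arenas---is exactly right.
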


For the Hausdorff lifting of $\dhamm$ or $\dlev$, the resulting notion of counterfactual causes in games is more complicated.
If we try to adapt the approach used in transition systems, we need a way to capture the minimum distance of a given strategy to the closest winning strategies. 
However,  shortest path games (as extension of the weighted transition systems used for $\dhamm$- and $\dlev$-counterfactual causes in transition systems) cannot be employed in an obvious way.
In this paper,  we now instead consider two further distance functions related to the Hamming distance for which we can provide algorithmic results.

\paragraph*{Hamming strategy distance.} 
Let \loosestrategy and \winstrategy be two MD-strategies for \Pl in  \game, we define the Hamming strategy distance  by $\dhamm^s(\loosestrategy, \winstrategy) = 
|\{\loc \in \Locs \mid \loosestrategy(\loc) \neq \winstrategy(\loc)\}|$. As the Hamming distance on paths counts positions at which traces differ, the Hamming strategy distance counts positions at which two MD-strategies differ. 
Using a similar proof using shortest-path games~\cite{KhachiyanBBEGRZ-07} as for Theorem~\ref{thm:checking_dhamm}, 
we obtain the following polynomial-time result in the case of \emph{aperiodic} games. 
\begin{restatable}{theorem}{checkhammgames}
	\label{thm:checking-hamm-game}
	Let $\game=(V,\locinit,\Trans)$  be an acyclic reachability game with target set $\LocsT$ and $\sigma$ a MD-strategy for player $\Pi$. Let $C$ be a set of locations disjoint from $\LocsT$.
	We can check in polynomial time whether $C$ is a $\dhamm^s$-counterfactual cause for the fact that $\Pi$ is losing using $\sigma$  in $\game$.
\end{restatable}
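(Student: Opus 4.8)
The plan is to reduce the three conditions of Definition~\ref{def:game_conterfactual-cause} to a short sequence of game computations on the acyclic arena and, for the crucial third condition, to a weighted‑game computation in the spirit of the proof of Theorem~\ref{thm:checking_dhamm}. First I would dispose of conditions~1 and~2. Condition~1 asks whether some $\sigma$-play reaches $C$ and is losing for $\Pi$; since $\game^\sigma$ is a one‑player arena controlled by the opponent $\bar\Pi$, this is a plain reachability check and runs in polynomial time. Condition~2 asks whether $\Pi$ has an MD‑strategy avoiding $C$, which is exactly the question whether $\locinit$ lies in the winning region $W$ of the safety game ``never visit $C$''; again polynomial. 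From now on I would work inside $W$: every $C$-avoiding strategy keeps all its plays in $W$, and by definition of $W$ each $\bar\Pi$-vertex of $W$ has all its successors in $W$, while each $\Pi$-vertex of $W$ has at least one successor in $W$. I would also record the basic observation that an optimal $C$-avoiding strategy may be taken to agree with $\sigma$ outside the part of $W$ it actually reaches (reverting a deviation at an unreached vertex changes neither reachability nor safety), so that only deviations at reachable vertices contribute to $\dhamm^s$.

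The decision of condition~3 I would rephrase as the comparison of two numbers. Let $d_{\min}$ be the least $\dhamm^s$-distance to $\sigma$ of an MD‑strategy avoiding $C$, and let $d^{\mathit{lose}}_{\min}$ be the least such distance among the $C$-avoiding strategies that are in addition \emph{not} winning for $\Pi$ (for $\Pi=\SafePl$ this means some consistent play reaches $\LocsT$, for $\Pi=\ReachPl$ that some consistent play ends in a non‑target leaf). All minimal‑distance $C$-avoiding strategies are winning precisely when no losing $C$-avoiding strategy attains the value $d_{\min}$, i.e.\ exactly when $d_{\min} < d^{\mathit{lose}}_{\min}$, with the convention $d^{\mathit{lose}}_{\min}=\infty$ when no losing $C$-avoiding strategy exists. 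Hence it suffices to compute both minima in polynomial time and compare them.

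To compute the minima I would imitate the weighting of Theorem~\ref{thm:checking_dhamm}: at every $\Pi$-vertex $v$ give the edge to $\sigma(v)$ weight $0$ and each other outgoing edge weight $1$, so that the accumulated weight of a strategy records precisely the vertices at which it deviates from $\sigma$. After deleting the vertices outside $W$ (equivalently trapping $C$) I obtain an acyclic weighted arena and propagate values in reverse topological order, treating $\Pi$-vertices as minimiser vertices (a single chosen successor, hence no combinatorial issue there) and $\bar\Pi$-vertices as ``all‑successor'' vertices, as in shortest‑path games \cite{KhachiyanBBEGRZ-07}. For $d^{\mathit{lose}}_{\min}$ I would, in the same backward pass, additionally guarantee that at least one branch is forced into a losing leaf: either by augmenting each DP entry with a bit recording whether a losing leaf has already been made reachable in the subgame, or by computing, for each losing leaf separately, the minimal number of deviations needed to render it reachable while staying in $W$, and taking the minimum over these polynomially many leaves.

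The hard part will be making this recursion return the \emph{per‑vertex} count $\dhamm^s(\sigma,\tau)=|\{v:\sigma(v)\neq\tau(v)\}|$ rather than the per‑play quantity that a literal shortest‑path game yields. At a $\bar\Pi$-vertex the opponent may branch into several subgames, each of which must avoid $C$; a vertex where $\sigma$ must be changed has to be counted once even though it is charged on every play through it, and two distinct branches may force changes at two \emph{different} vertices. Thus the ordinary min--max recursion, which takes a \emph{maximum} over the opponent's successors, underestimates $d_{\min}$, while a plain sum over successors overcounts vertices shared between branches. The technical crux is therefore to show that on an acyclic arena the minimal deviation \emph{set} is still computable in polynomial time: for tree‑shaped arenas the combination at a $\bar\Pi$-vertex is simply additive, since the successor subgames are vertex‑disjoint, which settles that case immediately; for a general DAG I would account for the vertices reachable from several successors so that each deviating vertex is charged exactly once, the acyclicity ensuring that a single backward pass over the topological order suffices. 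Once $d_{\min}$ and $d^{\mathit{lose}}_{\min}$ have been obtained in this way, the algorithm reports that $C$ is a $\dhamm^s$-counterfactual cause iff $d_{\min}<d^{\mathit{lose}}_{\min}$, and every step above runs in polynomial time.
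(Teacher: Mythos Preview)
Your reduction of condition~3 to the comparison $d_{\min}<d^{\mathit{lose}}_{\min}$ is sound, and you correctly diagnose that a shortest-path-game recursion returns a per-play maximum rather than the per-vertex count $\dhamm^s$. The gap lies in your resolution of this mismatch. You claim that on a general acyclic arena ``a single backward pass'' that ``accounts for shared vertices'' computes $d_{\min}$, but you provide no mechanism, and the obstacle is structural rather than mere bookkeeping: at an opponent vertex with successors $c_1,\dots,c_k$ one must select deviation sets $D_i$ in the respective subgames and pay $\lvert D_1\cup\dots\cup D_k\rvert$, a set-union cost that does not reduce to scalar values carried by the~$c_i$.

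In fact $d_{\min}$ already encodes minimum vertex cover on acyclic arenas, so your plan cannot run in polynomial time unless $\mathrm{P}=\mathrm{NP}$. Given a graph $G=(U,E)$, let the initial vertex be an opponent vertex with a child $g_e$ for every edge $e$; each $g_e$ is a $\Pi$-vertex with $\sigma(g_e)\in C$ and alternative successors $p_u,p_v$ for the endpoints of $e$; each $p_u$ is a $\Pi$-vertex with $\sigma(p_u)\in C$ and one alternative safe leaf. Any $C$-avoiding $\tau$ must deviate at every $g_e$ and at every reached $p_u$, while at unreached $p_u$ it may agree with $\sigma$; hence $\dhamm^s(\sigma,\tau)=|E|+|\{u:\exists e\ni u,\ \tau(g_e)=p_u\}|$, and the second summand ranges exactly over the vertex covers of $G$, giving $d_{\min}=|E|+\mathrm{minVC}(G)$. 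The paper's one-line sketch invokes shortest-path games directly (per-play weights) in the style of Theorem~\ref{thm:checking_dhamm} and defers the details to the extended version; whatever argument appears there, it cannot proceed via an exact computation of $d_{\min}$ as you propose, so you would need either to argue that the per-play comparison already decides condition~3, or to find a different polynomial characterisation of the $\dhamm^s$-closest $C$-avoiding strategies.
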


\paragraph*{Hausdorff-inspired distance \dHH.}
The distance function \dHH computes the number of vertices where two MD-strategies make 
distinct choices along each play of both MD-strategies. It  hence has some similarity to a Hausdorff-lifting of the Hamming distance on paths. This Hausdorff-lifting, however, counts the number of 
\emph{occurrences} of  vertices at which two paths differ (in their label).
Instead, for a play $\winplay=\loc_0\loc_1\dots$ and a strategy \loosestrategy for \Pl, we define 
the \emph{distance between 
	\loosestrategy and \winplay} $\dist(\winplay, \loosestrategy)$ as the number of vertices  
$\loc \in \Locs_{\Pl}$ (i.e., not the number of occurrences) such that there exists $i \in \N$ with $\loc = \loc_i$ in \winplay, and $\loosestrategy(\loc_i) \neq 
(\loc_i, \loc_{i+1})$.We define
\dHH for  two strategies $\winstrategy, \loosestrategy$ by 
\[
\dHH(\winstrategy, \loosestrategy) = \max \big(
\sup_{\winplay \mid \text{\winstrategy-play}} \dist(\winplay, \loosestrategy), 
\sup_{\looseplay \mid \text{\loosestrategy-play}} \dist(\looseplay, \winstrategy) 
\big).
\]
To simplify the notation, we define
$\dstrat{\winstrategy}(\loosestrategy) = 
\sup_{\winplay \mid \text{\winstrategy-play}} \dist(\winplay, \loosestrategy)$.
We prove that the threshold problem for
\dHH is NP-complete via a reduction from the \emph{longest simple path problem}:

\begin{restatable}{proposition}{propdHH}
	\label{prop:dHH_NP-c}
	Let \game be a reachability game, \loosestrategy, \winstrategy be two  
	MD-strategies for \Pl, and $k \in \N$ be a  threshold. Then deciding 
	if $\dHH(\winstrategy, \loosestrategy) \geq k$ is NP-complete.
\end{restatable}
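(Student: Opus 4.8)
The plan is to prove the two directions of NP-completeness separately: membership in NP by exhibiting a polynomially checkable certificate, and NP-hardness by a reduction from the longest simple path problem. Since $\dHH(\winstrategy,\loosestrategy)=\max\big(\dstrat{\winstrategy}(\loosestrategy),\dstrat{\loosestrategy}(\winstrategy)\big)$, the threshold $\dHH(\winstrategy,\loosestrategy)\geq k$ holds iff at least one of the two suprema is $\geq k$, so it suffices to analyse a single term of the form $\dstrat{\winstrategy}(\loosestrategy)=\sup_{\winplay}\dist(\winplay,\loosestrategy)$.

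For membership in NP, I would guess which of the two directions realises the threshold and guess a witnessing play. The key observation is that a witness can be taken to be of polynomial length: if some $\winstrategy$-play $\winplay$ visits at least $k$ distinct vertices $\loc\in\Locs_{\Pl}$ with $\winstrategy(\loc)=(\loc_i,\loc_{i+1})\neq\loosestrategy(\loc)$, then between two consecutive first visits of such ``disagreement'' vertices one may replace the traversed segment by a shortest $\winstrategy$-consistent connecting path of length at most $|\Locs|$. This yields a $\winstrategy$-consistent prefix of length $O(k\cdot|\Locs|)$ that still extends to a play and already exhibits $k$ distinct disagreement vertices. The verifier then checks, in polynomial time, that the guessed sequence is a legal $\winstrategy$-play prefix and counts the distinct disagreement vertices; hence the problem is in NP.

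For NP-hardness, I would reduce from the longest simple path problem: given a directed graph $G=(V,E)$, a start vertex, and a bound $\ell$, decide whether $G$ admits a simple path visiting at least $\ell$ vertices. The construction encodes each $v\in V$ by a \Pl-vertex whose two outgoing choices are distinguished by $\winstrategy$ and $\loosestrategy$ (so that every such vertex is a disagreement vertex), while the routing through the edges of $G$ is delegated to opponent-controlled vertices; the threshold is set to $k=\ell$. The intended correspondence is that $\winstrategy$-plays trace paths in $G$ and the distinct-disagreement count equals the number of distinct $G$-vertices traversed, so that $\dstrat{\winstrategy}(\loosestrategy)\geq\ell$ matches the existence of a long (simple) path, while the construction must be arranged so that the reverse Hausdorff term $\dstrat{\loosestrategy}(\winstrategy)$ stays bounded and does not spuriously push the maximum above the threshold.

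\textbf{Main obstacle.} The delicate point is precisely that $\dist$ counts \emph{distinct} vertices rather than occurrences: a naive encoding in which the opponent freely routes a token through a copy of $G$ lets a play wander and collect every disagreement vertex reachable inside a strongly connected region, so the supremum would degenerate to the (polynomial) number of reachable disagreement vertices rather than to a longest \emph{simple} path. The crux of the hardness proof is therefore the design of gadgets that force the optimal collecting play to behave like a simple path --- i.e.\ that guarantee revisiting a vertex can never increase the disagreement count beyond what a simple path achieves --- so that the hardness of avoiding repetitions is genuinely injected into the instance. Getting this alignment between the vertex-counting semantics of $\dist$ and the no-repetition requirement of simple paths right, together with controlling the second Hausdorff direction, is where I expect the real work to lie; once it is in place, combining it with the NP-membership argument yields NP-completeness.
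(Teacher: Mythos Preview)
Your overall plan --- NP membership via a polynomial-length play certificate, hardness via a reduction from the longest simple path problem --- is exactly the paper's approach. The membership argument is sound: between successive first visits of disagreement vertices, a witnessing $\winstrategy$-play can be shortened to a simple connecting segment inside $\game^{\winstrategy}$, yielding a prefix of length $O(k\cdot|\Locs|)$ that the verifier checks in polynomial time.

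The hardness direction, however, has a genuine gap: you identify the central obstacle but do not resolve it, and its resolution is the entire content of the proof. Your own analysis already shows why the naive encoding fails. Since $\winstrategy$ is MD, a $\winstrategy$-play is an arbitrary maximal walk in the fixed graph $\game^{\winstrategy}$, so inside any non-trivial strongly connected component the opponent can sweep up \emph{every} disagreement vertex before moving on. In the encoding you sketch --- the opponent routes through a copy of $G$ and each $G$-vertex is decorated with a vertex of \Pl on which $\winstrategy$ and $\loosestrategy$ disagree --- the quantity $\dstrat{\winstrategy}(\loosestrategy)$ therefore equals the maximum total number of disagreement vertices along a path in the SCC condensation of $\game^{\winstrategy}$, a longest-path-in-a-DAG computation and hence polynomial, not the longest simple path in $G$. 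Deferring to unspecified ``gadgets that force the optimal collecting play to behave like a simple path'' is precisely deferring the proof. To complete the argument you must give a concrete construction that defeats this SCC collapse (so that revisiting a $G$-vertex never helps) while simultaneously keeping the other Hausdorff term $\dstrat{\loosestrategy}(\winstrategy)$ below the threshold; the paper defers the details of such a construction to its extended version.
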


The  proposition explains why understanding $\dHH$-counterfactual causes is complex. We leave a further investigation of such notions for future work. As a first step toward a better understanding, we turn our attention to 
a conceptually simpler notion, the explanation induced by a counterfactual cause.

\begin{example}
	Let us illustrate counterfactual causes according to distances on strategies. 
	We consider the reachability game depicted in the left of \figurename{~\ref{fig:game_strat-cause-2}} 
	and the non-winning strategy $\loosestrategy$ for \ReachPl depicted in~green. 
	Under $\dpref^H$ or $\dHH$, the counterfactual cause for \ReachPl is $\{v_2, v_3\}$. 
	Indeed, there exists one play that reaches $v_3$ and loses for \ReachPl, 
	and there exists a unique strategy that avoids $\{v_2, v_3\}$ by 
	changing the choice of $\loosestrategy$ in $v_1$. Moreover, this 
	counterfactual cause is minimal since $\{v_3\}$ is not a cause. 
	Indeed, the (losing) strategy that differs from $\loosestrategy$ 
	in v$_0$ and $v_1$ avoids $\{v_3\}$ with a minimal distance to $\loosestrategy$, 
	i.e.\ $2^{-2}$ for $\dpref^H$ and $1$ for $\dHH$. 
	Under $\dhamm^s$, the counterfactual cause for \ReachPl is $\{v_3\}$. 
	Indeed, two strategies exist with a distance of $1$ to
	$\loosestrategy$ according to the vertex where \ReachPl 
	changes its decision. In these two strategies, only one 
	avoids $\{v_3\}$: the strategy where \ReachPl change its decision in $v_1$.  
	\markend
\end{example}

\subsection{$D$-counterfactual explanation}
\label{sec:explain-strat}

Given a $D$-counterfactual cause, we want to explain what is wrong in the losing strategy for \Pl. In particular, we are interested in sets of locations $C$ such that \Pl could have won the game if she had not made the decisions of $\sigma$ in the locations in $C$. 

\begin{definition}
	\label{def:game_quasi-strat-cause}
	Let \game be a reachability game  and \loosestrategy be a non-winning 
	MD-strategy for \Pl. Let $E \subseteq \Locs_{\Pl}$.
	We call $E$ an \emph{explanation} in \game under \loosestrategy if there exists a winning 
	MD-strategy \winstrategy such that for 
	all vertices $\loc \in \Locs_{\Pl}$, $\winstrategy(\loc) = \loosestrategy(\loc)$ 
	iff $\loc \notin E$. We call such a $\tau$ an
	\emph{$E$-distinct \loosestrategy-strategy}.
\end{definition}

%
%

We note that the definition of an explanation does not refer to a distance function. 
However, given a $D$-counterfactual cause, we can  compute an explanation no matter which  distance $D$ is used. 

\begin{proposition}\label{prop:difference_explanation}
	Let $\game=(V,\locinit,\Trans)$ be a reachability game, $D$ a distance function on strategies and \loosestrategy be a non-winning 
	MD-strategy for \Pl. Let $C \subseteq \Locs$ be a $D$-counterfactual cause.
	We can compute an explanation $E$ (from $C$) in polynomial time.
\end{proposition}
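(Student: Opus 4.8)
The plan is to turn the counterfactual cause $C$ into a concrete winning strategy that avoids $C$, and then to read off the explanation as the set of vertices on which this strategy disagrees with \loosestrategy.

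First I would establish that a winning MD-strategy avoiding $C$ exists at all. Since \game is finite, player \Pl has only finitely many MD-strategies, and by condition 2 of Definition~\ref{def:game_conterfactual-cause} at least one of them avoids $C$. Hence the set of $C$-avoiding MD-strategies is finite and non-empty, so the minimum $D$-distance to \loosestrategy over this set is attained by some strategy; by condition 3 that strategy is winning. This argument is purely existential---it does not compute the minimal-distance strategy, which may be hard (cf.\ Proposition~\ref{prop:dHH_NP-c})---but it guarantees that \Pl can win while avoiding $C$.

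Next I would compute \emph{some} such strategy in polynomial time by solving a combined game in which \Pl must win \emph{and} avoid $C$, ignoring $D$ entirely. I make every vertex of $C$ a losing sink for \Pl and then solve the resulting game for \Pl's original objective: if \Pl is \ReachPl this is the reachability objective $\lozenge \LocsT$ (so reaching the sink $C$ forfeits the game and is thereby avoided), and if \Pl is \SafePl this is the safety objective $\Box \neg(\LocsT \cup C)$. Both are standard reachability/safety games solvable by the usual attractor/fixpoint computation in polynomial time, and both admit MD winning strategies (see the Preliminaries). By the previous paragraph \Pl wins this combined game, so the computation returns a winning MD-strategy \winstrategy all of whose plays from \locinit avoid $C$; outside the region reached under the computed strategy I set $\winstrategy(\loc)=\loosestrategy(\loc)$, which (as these vertices are never visited on \winstrategy-plays from \locinit) keeps \winstrategy winning and total.

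Finally I would define $E = \{\loc \in \Locs_{\Pl} \mid \winstrategy(\loc) \neq \loosestrategy(\loc)\}$. By construction \winstrategy is a winning MD-strategy for \Pl that differs from \loosestrategy exactly on $E$, so \winstrategy is an $E$-distinct \loosestrategy-strategy and $E$ is an explanation in the sense of Definition~\ref{def:game_quasi-strat-cause}. Computing $E$ is a single scan over $\Locs_{\Pl}$, so together with the polynomial-time game solving the whole procedure is polynomial. The main obstacle is conceptual rather than algorithmic: it is the bridge between the two middle paragraphs, since the definition of a counterfactual cause constrains only the minimal-$D$-distance strategies avoiding $C$, whereas the combined game we solve is oblivious to $D$. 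The finiteness of the MD-strategy space is what lets us conclude that \emph{some} winning $C$-avoiding strategy exists, so that the $D$-free game solver is guaranteed to succeed.
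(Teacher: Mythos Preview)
Your proposal is correct and takes essentially the same approach as the paper: remove (or sink) $C$, solve the resulting game with the attractor method to obtain a winning MD-strategy $\winstrategy$, and set $E=\{\loc\mid\loosestrategy(\loc)\neq\winstrategy(\loc)\}$. You are simply more explicit than the paper about why a winning $C$-avoiding strategy must exist (the finiteness argument bridging conditions~2 and~3) and about extending $\winstrategy$ by $\loosestrategy$ on unreached vertices so that it is a total winning strategy in the original game.
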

\begin{proof}
	Let $\game'=(V\setminus C,\locinit,\Trans)$ be the reachability game. Since $D$ is a $D$-counterfactual cause, we 
	know that there exists a winning strategy $\tau$ in $\game'$. We can compute this strategy in  time polynomial in the size of $\game'$ with the attractor method and we define $E = \{v \mid \loosestrategy(v) \neq \tau(v) \}$.
\end{proof}

A winning strategy differing from $\sigma$ in $E$ might not have much in common with $\sigma$. 
For this reason, explanations that point out changes in the
decisions of $\sigma$ in $E$ that enforce only the minimal necessary change to obtain a winning strategy $\tau$ from $\sigma$ are of particular interest. 
We can use a distance function $D$ to quantify how much a strategy needs to be changed.

\begin{definition}
	Let \game be a reachability game  and \loosestrategy be a non-winning 
	MD-strategy for \Pl. For a distance function $D$ for MD-strategies, we call a explanation 
	$E$ a \emph{$D$-minimal explanation}, if there exists a winning $E$-distinct 
	\loosestrategy-strategy  $\winstrategy$ with 
	$d(\winstrategy, \loosestrategy) = 
	\min \{d(\strategy, \loosestrategy)\mid {\text{$\strategy$ is a winning MD-strategy for $\Pl$}}\}$.
\end{definition}

For a strategy $\sigma$ and an explanation $E$, the distance $\dhamm^s(\sigma,\tau)$ for an $E$-distinct $\sigma$-strategy $\tau$ is precisely $|E|$. So, $\dhamm^s$-minimal explanations are cardinality-minimal explanations.

\begin{figure}[t]
\centering
	\begin{minipage}{.35\textwidth}
		\scalebox{0.75}{\begin{tikzpicture}
			[scale=1,->,>=stealth',auto ,node distance=0.5cm, thick]
			\tikzstyle{r}=[thin,draw=black,rectangle]
			
			\node[scale=1,rectangle, draw] (start) {};
			\draw[<-] (start) --++(0,+0.7);
			\node[scale=1, circle, draw, left=1.5 of start] (step0) {$v_0$};
			\node[scale=1, circle, draw, right=1.5 of start] (step1) {$v_1$};
			
			\node[scale=1, rectangle, draw, below=.7 of step0,xshift=-1cm] (step00) {};
			\node[scale=1, rectangle, draw, below=.7 of step0,xshift=1cm] (step01) {$v_2$};
			
			\node[scale=1, rectangle, draw, below=.7 of step00,xshift=-.5cm] (step000) {};
			\node[scale=1, rectangle, draw, below=.7 of step00,xshift=.5cm] (step001) {};
			
			\node[scale=1, rectangle, draw, below=.7 of step01,xshift=-.5cm] (step010) {};
			\node[scale=1, rectangle, draw=none, below=.7 of step01,xshift=.5cm] (step011) {$\locT$};

			\node[scale=1, rectangle, draw, below=.7 of step1,xshift=-1cm] (step10) {$v_3$};
			\node[scale=1, rectangle, draw, below=.7 of step1,xshift=1cm] (step11) {};

			\node[scale=1, rectangle, draw=none, below=.7 of step10,xshift=-.5cm] (step100) {$\locT$};
			\node[scale=1, rectangle, draw, below=.7 of step10,xshift=.5cm] (step101) {};
			
			\node[scale=1, rectangle, draw, below=.7 of step11,xshift=-.5cm] (step110) {};
			\node[scale=1, rectangle, draw, below=.7 of step11,xshift=.5cm] (step111) {};

			\draw[color=black,->] (start) edge  (step0) ;
			\draw[color=black,->] (start) edge  (step1) ;
			\draw[color=black,->, ForestGreen,line width=2pt] (step0) edge node[left] {\textcolor{ForestGreen}{$\loosestrategy$}} (step00) ;
			\draw[color=black,->] (step0) edge  (step01) ;
			\draw[color=black,->, ForestGreen,line width=2pt] (step1) edge node[left] {\textcolor{ForestGreen}{$\loosestrategy$}} (step10) ;
			\draw[color=black,->] (step1) edge  (step11) ;
			\draw[color=black,->] (step00) edge  (step001) ;
			\draw[color=black,->] (step00) edge  (step000) ;
			
			\draw[color=black,->] (step01) edge  (step011) ;
			\draw[color=black,->] (step01) edge  (step010) ;
			
			\draw[color=black,->] (step10) edge  (step101) ;
			\draw[color=black,->] (step10) edge  (step100) ;
			
			\draw[color=black,->] (step11) edge  (step111) ;
			\draw[color=black,->] (step11) edge  (step110) ;

			\end{tikzpicture}}
	\end{minipage}
	\hfill
	\begin{minipage}{.6\textwidth}
		\begin{tikzpicture}[xscale=.7,every node/.style={font=\footnotesize}, 
		every label/.style={font=\scriptsize}]
		\node[PlayerSafe] at (0,0) (s0) {$\loc_0$};
		\node[PlayerReach] at (-2,-1) (s1) {$\loc_1$};
		\node[PlayerReach] at (2, -1) (s2) {$\loc_2$};
		\node[target] at (0,-2) (e) {$\locT$};
		\node[strat] at (-2.1, 0) (st1) {};
		\node[strat] at (0, -1) (st2) {};
		
		\draw[->]
		(s0) edge (s1)
		(s0) edge (s2)
		(s1) edge[loop above,min distance=10mm] node[above] {$\trans_0$} (s1)
		(s1) edge node[below] {$\trans_1$} (e)
		(s2) edge node[above] {$\trans_2$} (s1)
		(s2) edge node[below, xshift=.05cm] {$\trans_4$} (e)
		;
		
		\draw[ForestGreen, ->, line width=0.5mm]
		(s1) edge[bend left=10] node[left] {\textcolor{ForestGreen}{$\loosestrategy$}} (st1)
		(s2) edge node[above] {\textcolor{ForestGreen}{$\loosestrategy$}}(st2)
		;
		
		\begin{scope}[xshift=8.5cm]
		\node[PlayerSafe] at (0,0) (s0) {$\loc_0$};
		\node[PlayerReach] at (-2,-1) (s1) {$\loc_1$};
		\node[PlayerReach] at (2, -1) (s2) {$\loc_2$};
		\node[target] at (0,-2) (e) {$\locT$};
		
		\node[strat] at (-1.9, -2.1) (st1) {};
		\node[strat] at (0, -1) (st2) {};
		\node[strat] at (-3, 0) (st3) {};
		\node[strat] at (-3, -.65) (st4) {};
		
		\node[PlayerReach] at (-4.5,0) (w0) {$w_0$};
		\node[PlayerReach] at (-4.5, -1) (w1) {$w_1$};
		\node[PlayerReach,label={30:$\locT$}] at (-4.5, -2) (w2) {$w_2$};
		
		\draw[->]
		(s0) edge (s1)
		(s0) edge (s2)
		(s1) edge[loop below,min distance=10mm] (s1)
		(s1) edge (e)
		(s2) edge (s1)
		(s2) edge  (e)
		(w0) edge (w1)	
		(w0) edge (s0)
		(w1) edge (w2)	
		(w1) edge (s0)	
		;
		
		\draw[RedOrange, ->, line width=0.5mm]
		(s1) edge[bend left=10] node[right] {\textcolor{RedOrange}{$\loosestrategy'$}} (st1)
		(s2) edge node[above] {\textcolor{RedOrange}{$\loosestrategy'$}}(st2)
		(w0) edge node[above] {\textcolor{RedOrange}{$\loosestrategy'$}} (st3)
		(w1) edge node[above] {\textcolor{RedOrange}{$\loosestrategy'$}}(st4)
		;
		\end{scope}
		
		\end{tikzpicture}
	\end{minipage}	
	\caption{On the left and in the middle, two reachability games with initial vertex $v_0$ and 
		strategy $\sigma$ for $\ReachPl$ (depicted in green). On the right, the reachability game 
		obtained by reduction of Corollary~\ref{cor:check-hamm-game} from the game depicted in the middle 
		with initial vertex $w_0$ and $\loosestrategy'$ be a non winning strategy for \ReachPl.}
	\label{fig:game_strat-cause-2}
\end{figure}
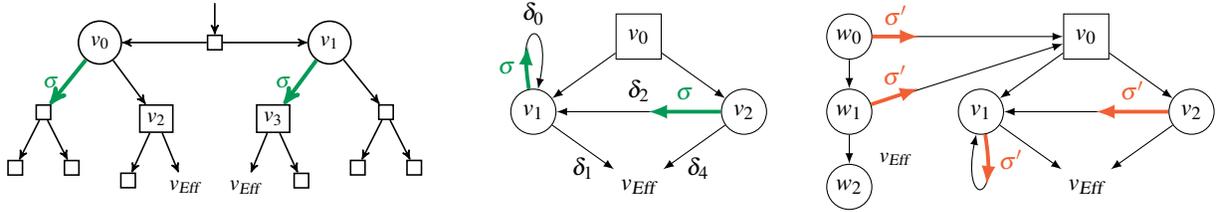

\begin{example} \label{ex:quasi-strat-cause}
	Let us illustrate explanations and $D$-minimal explanations.	
	We consider the reachability game \game where \ReachPl wins depicted 
	in the left of Figure {\ref{fig:game_strat-cause-2}} with \loosestrategy, a non-winning 
	MD-strategy for \ReachPl, depicted in green. 
	We note that $E = \{\loc_1, \loc_2\}$ 
	is an explanation in \game under \loosestrategy. A  winning 
	$E$-distinct \loosestrategy-strategy \winstrategy for \ReachPl is given by
	$\winstrategy(\loc_1) = \trans_1$ and $\winstrategy(\loc_2) = \trans_4$. 
	However, $E$ is not a $\dHH$-minimal explanation or $\dhamm^s$-minimal explanation. 
	Clearly, $\dhamm^s(\winstrategy,\loosestrategy)=2$. Further, also $\dHH(\winstrategy,\loosestrategy)=2$ as the $\sigma$-play 
	$\loc_0\loc_2\loc_1^{\omega}$ visits two states, namely $\loc_2$ and $\loc_1$ at which $\sigma$ and $\tau$ make different decisions.
	The set $E^\prime=\{\loc_1\}$, however is a $\dHH$-minimal explanation and $\dhamm^s$-minimal explanation:
	the $E^\prime$-distinct $\sigma$-strategy $\tau^\prime$ choosing $\trans_1$ in $\loc_1$ and behaving like $\sigma$ in $\loc_2$ wins and has $\dhamm^s$- and $\dHH$-distance $1$ to $\sigma$.
	As any winning strategy has at least distance $1$ to $\sigma$, $E^\prime$ is hence a $D$-minimal explanation for both distance functions.
	\markend
\end{example}

\label{subsec:strat_find-strat}

For $D$-minimal explanations, it is central to find a winning MD-strategy that minimises 
the distance $D$ to the given losing strategy $\sigma$. We take a look at this problem from the point of view of 
\ReachPl and prove that for $\dhamm^s$ and $\dHH$ the associated threshold problems are not in P if P$
\not=$NP.


\begin{restatable}{theorem}{stratFindingNPc}
	\label{thm:strat_finding-NP-c}
	Given a game $\game$,  a losing strategy $\sigma$ for $\ReachPl$, and $k\in \mathbb{N}$,
	deciding if there exists a winning MD-strategy \winstrategy for \ReachPl 
such that $\dhamm^s(\winstrategy, \loosestrategy) \leq k$ is  NP-complete. Further, the problem whether there is a winning MD-strategy \winstrategy with
$\dHH(\winstrategy, \loosestrategy) \leq k$ is not in P if P$\not=$NP.
\end{restatable}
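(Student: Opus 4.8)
The plan is to treat the two distance functions separately: I would prove full NP-completeness for $\dhamm^s$ and NP-hardness (hence ``not in P unless P$=$NP'') for $\dHH$. For $\dhamm^s$, membership in NP is immediate: an MD-strategy $\winstrategy$ is a polynomial-size certificate, and given $\winstrategy$ I can check in polynomial time that it is winning for $\ReachPl$ (the attractor computation of the preliminaries) and that $\dhamm^s(\winstrategy,\loosestrategy)=|\{\loc\in\Locs\mid \winstrategy(\loc)\neq\loosestrategy(\loc)\}|\le k$, which is a direct count. I would emphasise that this membership argument is exactly what \emph{fails} for $\dHH$: by Proposition~\ref{prop:dHH_NP-c} deciding $\dHH(\winstrategy,\loosestrategy)\ge k$ is NP-complete, so verifying $\dHH(\winstrategy,\loosestrategy)\le k$ is coNP-complete and no polynomial certificate check is available, which is why only NP-hardness can be claimed in that case.

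For NP-hardness of $\dhamm^s$, I would reduce from \textsc{Vertex Cover}: given $G=(N,A)$ and a bound $k$, build a reachability game whose initial vertex $v_0$ belongs to $\SafePl$ and has an edge to an \emph{edge-gadget} $g_a$ for each $a\in A$ (so $\SafePl$ chooses the edge to challenge). For $a=\{x_i,x_j\}$ the vertex $g_a\in\Locs_{\ReachPl}$ has three outgoing edges, to the two \emph{node-switches} $u_i,u_j$ and to a losing dead end $d_a$, with default $\loosestrategy(g_a)=(g_a,d_a)$; each node-switch $u_i\in\Locs_{\ReachPl}$ has an edge to the target and an edge to a losing dead end $\ell_i$, with default $\loosestrategy(u_i)=(u_i,\ell_i)$. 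Then $\loosestrategy$ is losing while $\ReachPl$ wins (selecting all nodes). The core claim is that a strategy $\winstrategy$ is winning iff for every edge $a$ it routes $g_a$ to an endpoint $u_i$ with $\winstrategy(u_i)=\LocsT$; hence $W_\winstrategy=\{x_i\mid \winstrategy(u_i)=\LocsT\}$ is a vertex cover and $\winstrategy$ differs from $\loosestrategy$ exactly at all $|A|$ gadgets and the $|W_\winstrategy|$ selected switches, so $\dhamm^s(\winstrategy,\loosestrategy)=|A|+|W_\winstrategy|$. Conversely every cover yields such a winning strategy, so the minimal $\dhamm^s$-distance of a winning strategy is $|A|$ plus the minimum cover size, and the threshold $|A|+k$ completes the reduction.

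For $\dHH$ this construction does \emph{not} transfer, because $\dHH$ counts disagreement vertices \emph{along a single play} (and then takes a supremum over plays), whereas in the gadget above every play visits only a bounded number of $\ReachPl$-vertices, so $\dHH$ of a winning strategy stays bounded independently of the cover size. Instead I would give a path-based reduction in the spirit of the longest-simple-path reduction underlying Proposition~\ref{prop:dHH_NP-c}: construct a game in which $\ReachPl$ always has a winning strategy, but in which $\SafePl$ can route a $\winstrategy$-play along any simple path of an auxiliary graph encoding the source instance, each visited vertex being a point where $\winstrategy$ is forced to deviate from $\loosestrategy$. Enforcing that $\winstrategy$ be winning fixes the relevant deviations, while the constraint $\dHH(\winstrategy,\loosestrategy)\le k$ becomes a bound on the longest deviation-play $\SafePl$ can force; the gadget is then arranged so that a winning strategy respecting this bound exists precisely when the source instance is positive, giving a Karp reduction from an NP-hard problem and hence the claimed lower bound.

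The routine parts are the NP-membership argument and the vertex-cover reduction for $\dhamm^s$; the main obstacle is the $\dHH$ reduction, where one must force a single adversarial play to witness the NP-hard quantity while simultaneously keeping the winning condition intact, a balance made delicate by the per-play (rather than global) nature of $\dHH$. Some care is also needed with the dead ends so that non-target plays are genuinely losing (e.g.\ by adding self-loops, at the cost of introducing sink cycles), and in the cyclic case the quantification over MD-strategies must be reconciled with the distance definitions as noted after Definition~\ref{def:game_conterfactual-cause}.
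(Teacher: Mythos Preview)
Your treatment of $\dhamm^s$ is correct. NP-membership matches the paper verbatim. For NP-hardness the paper reduces from the (directed) \textsc{Feedback Vertex Set} problem rather than \textsc{Vertex Cover}: in the paper's construction the arena encodes the input graph with its cycles, $\loosestrategy$ keeps $\ReachPl$ trapped in those cycles, and changing $\loosestrategy$ at a vertex corresponds to removing that vertex, so a winning strategy within Hamming budget $k$ exists iff a feedback vertex set of size $\le k$ exists. Your \textsc{Vertex Cover} gadget is a legitimate alternative and arguably cleaner; it even shows hardness already for (essentially) acyclic arenas, whereas the FVS reduction inherently uses cycles. Both buy the same result; your version is more elementary, the paper's version reuses the same source problem for the $\dHH$ part.

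Your treatment of $\dHH$, however, is not yet a proof. You correctly explain why the $\dhamm^s$ gadget does not transfer (every play there touches only $O(1)$ many $\ReachPl$-vertices), but the replacement you outline is only a wish, not a construction: you neither name the source problem nor build the gadget, and the sketch as written has a sign problem. If, as you describe, every winning $\winstrategy$ is forced to deviate from $\loosestrategy$ at every vertex visited and $\SafePl$ can realise \emph{any} simple path of the auxiliary graph, then $\dHH(\winstrategy,\loosestrategy)$ equals the length of the longest simple path \emph{independently of} $\winstrategy$, so ``$\exists$ winning $\winstrategy$ with $\dHH\le k$'' becomes ``the longest simple path has length $\le k$'', which is coNP-hard, not the NP-hard Karp reduction you announce. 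That would still suffice for ``not in P unless P$=$NP'', but it is not what you wrote, and in any case the actual gadget is missing. The paper avoids this difficulty by giving a polynomial-time \emph{Turing} reduction from \textsc{Feedback Vertex Set} (not a many-one reduction): with an oracle for the $\dHH$-threshold problem one can decide FVS in polynomial time, which yields exactly the claimed ``not in P if P$\neq$NP''. You should either produce the concrete gadget and be explicit about whether your many-one reduction is from an NP-hard or a coNP-hard problem, or follow the paper and use a Turing reduction.
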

\begin{proof}[Proof sketch]
	To establish the NP upper bound for $\dhamm^s$, we can guess a MD-strategy $\tau$ for \ReachPl and check in polynomial time whether 
	it is winning and whether $\dhamm^s(\winstrategy, \loosestrategy) \leq k$.
	For the NP-hardness for $\dhamm^s$, we provide a polynomial-time many-one reduction from the  NP-complete
	decision version of the \emph{feedback vertex set} 
	\cite{Karp1972}. Given a cyclic (directed) graph $G$, this problem asks whether there is a set $S$ of size at most $k$
	such that if we remove this set, $G \setminus S$ becomes acyclic. 
	For the problem for \dHH, we provide a polynomial-time Turing reduction from the same problem.
	 A detailed proof  is 
	given in \cite{extended}. 
%
\end{proof}

We deduce that   checking $D$-minimality of an explanation  
cannot be done in polynomial time  if P$\not=$NP.

\begin{restatable}{corollary}{checkdhammsstrat}
	\label{cor:check-hamm-game}
	Let \game be a reachability game, \loosestrategy be a 
	non-winning MD-strategy for \ReachPl,  and $E \subseteq \Locs$. The problem to check if $E$ is a $\dhamm^s$-minimal explanation in $\game$ for \loosestrategy
	is  coNP-complete. The problem to check if $E$ is a $\dHH$-minimal explanation in $\game$ for \loosestrategy
	is not in P if P$\not=$NP.
\end{restatable}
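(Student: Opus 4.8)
The plan is to derive both statements from the threshold results of Theorem~\ref{thm:strat_finding-NP-c} by means of one polynomial gadget that turns a threshold instance into a minimality instance. The guiding observation is that an $E$-distinct $\loosestrategy$-strategy always has $\dhamm^s$-distance exactly $|E|$ to $\loosestrategy$; consequently, whenever $E$ is an explanation the minimal distance of a winning MD-strategy to $\loosestrategy$ is at most $|E|$, and $E$ is a $\dhamm^s$-minimal explanation if and only if (a) $E$ is an explanation and (b) no winning MD-strategy $\strategy$ satisfies $\dhamm^s(\strategy,\loosestrategy)<|E|$. To place the problem in coNP I would observe that (a) is decidable in polynomial time: fix $\loosestrategy(\loc)$ at every $\loc\in\LocsReach\setminus E$, delete the edge $\loosestrategy(\loc)$ at every $\loc\in E$ (which simultaneously verifies that an alternative move exists), and solve the resulting reachability game with the attractor method. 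Condition (b) is the complement of an NP predicate — guess $\strategy$ and check in polynomial time that it is winning and differs from $\loosestrategy$ on at most $|E|-1$ vertices — and hence lies in coNP. As (a) lies in P and thus in coNP, the conjunction is in coNP.

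For coNP-hardness I would give a polynomial many-one reduction from the complement of the $\dhamm^s$-threshold problem, which is coNP-complete by Theorem~\ref{thm:strat_finding-NP-c}. Given an instance $(\game,\loosestrategy,k)$, I construct $\game'$ by adding fresh vertices $w_0,\dots,w_k$ of \ReachPl with edges $w_i\to w_{i+1}$ (where $w_{k+1}\in\LocsT$ is a new target) and $w_i\to\locinit$ leading into the original game, declaring $w_0$ the initial vertex, extending $\loosestrategy$ to $\loosestrategy'$ by $\loosestrategy'(w_i)=\locinit$, and setting $E'=\{w_0,\dots,w_k\}$; the case $k=1$ is exactly the right-hand game of Fig.~\ref{fig:game_strat-cause-2}. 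Then $E'$ is always an explanation, witnessed by the $E'$-distinct strategy that routes $w_0\to w_1\to\dots\to w_{k+1}$ and wins at distance $k+1$. Moreover the minimal distance of a winning strategy in $\game'$ equals $\min(m,k+1)$, where $m$ is the minimal $\dhamm^s$-distance to $\loosestrategy$ of a winning strategy in $\game$: a winning strategy that keeps $w_0\to\locinit$ leaves $w_1,\dots,w_k$ unreachable and therefore pays only for the changes it makes inside the original game, while reaching the target through the chain costs $k+1$. Hence $E'$ is a $\dhamm^s$-minimal explanation iff $m\geq k+1$, i.e.\ iff no winning strategy in $\game$ has distance at most $k$, which is precisely the complement of the threshold instance.

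For $\dHH$ I would reuse this construction verbatim as a polynomial many-one reduction from the complement of the $\dHH$-threshold problem, redoing only the distance bookkeeping. The $E'$-distinct chain strategy still has $\dHH$-distance exactly $k+1$ (its unique play deviates at the $k+1$ vertices $w_0,\dots,w_k$, whereas the single $\loosestrategy'$-play deviates only at $w_0$), and a winning strategy that keeps $w_0\to\locinit$ leaves $w_1,\dots,w_k$ unreachable, so its $\dHH$-distance to $\loosestrategy'$ coincides with its $\dHH$-distance inside $\game$; the minimal winning distance in $\game'$ is thus again $\min(m,k+1)$, and $E'$ is $\dHH$-minimal precisely when no winning strategy in $\game$ has $\dHH$-distance at most $k$. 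A polynomial decision procedure for $\dHH$-minimality would therefore decide the complement of the $\dHH$-threshold problem, hence the threshold problem itself (P being closed under complement), contradicting Theorem~\ref{thm:strat_finding-NP-c} unless P $=$ NP. I expect this $\dHH$ bookkeeping to be the main obstacle: since $\dHH$ maximises over all plays of both strategies and counts distinct vertices rather than occurrences, one must argue with care that diverting out of the chain never produces a winning strategy cheaper than $\min(m,k+1)$ and that unreachable gadget vertices contribute nothing, so that the clean identity for the minimal winning distance genuinely holds.
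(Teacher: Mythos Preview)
Your proposal is correct and follows essentially the same approach as the paper: the coNP upper bound via the decomposition into ``$E$ is an explanation'' (polynomial) and ``no winning strategy is closer than $|E|$'' (coNP), and the hardness via the chain gadget $w_0,\dots,w_k,w_{k+1}$ prepended to the original game, which is precisely the construction illustrated on the right of Fig.~\ref{fig:game_strat-cause-2}. Your distance bookkeeping for both $\dhamm^s$ and $\dHH$ is accurate, including the observation that partial detours through the chain never beat $\min(m,k+1)$; the only cosmetic imprecision is speaking of ``the single $\loosestrategy'$-play'' when \SafePl may generate several, but they all agree at $w_0$ and inside $\game$ so the conclusion is unaffected.
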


Despite the hardness in the general case, if $\game^{\loosestrategy}$ is acyclic, we prove 
that we can compute the winning MD-strategy that minimises the $\dHH$-distance to $\sigma$  in polynomial time.
 From this strategy, a $\dHH$-minimal explanation can then be computed as in Proposition \ref{prop:difference_explanation}.
 The  proof of 
	Theorem~\ref{thm:find-trans_poly}  (in the extended version \cite{extended})  constructs a shortest-path game~\cite{KhachiyanBBEGRZ-07} without negative weights in which an optimal strategy, that leads to the desired winning strategy in the original game, can be computed in polynomial time.

\begin{restatable}{theorem}{findTransPoly}
	\label{thm:find-trans_poly}
	Let \game be reachability game where \ReachPl wins, and \loosestrategy be a 
	non-winning MD-strategy for \ReachPl such that $\game^{\loosestrategy}$ is 
	acyclic. Then, we can compute a winning MD-strategy $\winstrategy$ that minimizes the distance \dHH to 
	\loosestrategy in polynomial time.
\end{restatable}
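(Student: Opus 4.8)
The plan is to recast the minimization of $\dHH(\winstrategy,\loosestrategy)$ as a shortest-path game with nonnegative weights, in the spirit of Theorem~\ref{thm:checking-hamm-game} and \cite{KhachiyanBBEGRZ-07}. The first, enabling, observation is that under the hypothesis both suprema defining $\dHH$ count \emph{distinct} deviation vertices along plays, and that this coincides with counting \emph{occurrences}: by assumption $\game^{\loosestrategy}$ is acyclic, so every $\loosestrategy$-play visits each vertex at most once; and for any \emph{winning} MD-strategy $\winstrategy$ the reachable part of $\game^{\winstrategy}$ is acyclic as well, since a reachable cycle would yield an infinite $\winstrategy$-play never reaching $\LocsT$, contradicting that $\winstrategy$ is winning. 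Writing $E=\{\loc\in\LocsReach\mid \winstrategy(\loc)\neq\loosestrategy(\loc)\}$ for the deviation set, the two terms of $\dHH$ then become $\dstrat{\winstrategy}(\loosestrategy)=\max_{\winstrategy\text{-play }\winplay}|E\cap\winplay|$ and $\dstrat{\loosestrategy}(\winstrategy)=\max_{\loosestrategy\text{-play }\looseplay}|E\cap\looseplay|$, each now an \emph{additive} cost realizable by edge weights in $\{0,1\}$. I would also note that, without loss of generality, an optimal $\winstrategy$ may be assumed to agree with $\loosestrategy$ outside its reachable set, so that $E$ is contained in the $\winstrategy$-reachable vertices.

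Next I would build the shortest-path game $\game'$ capturing the first term $\dstrat{\winstrategy}(\loosestrategy)$. From $\game$, replace each \ReachPl vertex $\loc$ by a decision gadget offering a weight-$0$ ``keep'' edge that forces the move $\loosestrategy(\loc)$ and, for every other successor, a weight-$1$ ``switch'' edge to that successor; \SafePl vertices are left unchanged, the target stays $\LocsT$, and reaching any non-target terminal is treated as a loss (cost $\infty$). By the acyclicity argument above, the accumulated weight of a $\winstrategy$-play equals $\dist(\winplay,\loosestrategy)$, so \ReachPl minimizing the worst-case accumulated weight subject to being forced into $\LocsT$ exactly minimizes $\dstrat{\winstrategy}(\loosestrategy)$ over winning strategies. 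Since $\game$ is a game that \ReachPl wins, a finite-value winning strategy exists, and a shortest-path game with nonnegative weights is solvable in polynomial time \cite{KhachiyanBBEGRZ-07}; an optimal MD-strategy of $\game'$, read off through the gadgets, yields the corresponding $\winstrategy$.

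The hard part will be incorporating the second term $\dstrat{\loosestrategy}(\winstrategy)$ and the outer maximum. This term depends only on $E$ and on the \emph{fixed} DAG $\game^{\loosestrategy}$: it is the weight of a heaviest path in $\game^{\loosestrategy}$ that carries weight $1$ on the $E$-vertices. The difficulty is to let the maximizer also ``audit'' along a $\loosestrategy$-trajectory while scoring the \emph{same} set $E$ inside one memoryless game. The intended route is a product of $\game'$ with a mode bit chosen once by \SafePl at the start: in $\loosestrategy$-mode, \ReachPl still makes the keep/switch decision at each vertex (fixing $E$ and paying the same weight), but the environment overrides the actual move to $\loosestrategy(\loc)$, so that $\loosestrategy$-mode plays are precisely $\loosestrategy$-plays scored by $|E\cap\,\cdot\,|$; \SafePl then maximizes over the mode and its own choices. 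The genuinely delicate point is \emph{enforcing consistency} of \ReachPl's keep/switch decision across the two modes, since an unconstrained memoryless minimizer could otherwise declare ``keep'' in $\loosestrategy$-mode to understate $\dstrat{\loosestrategy}(\winstrategy)$; I would resolve this by identifying the two mode-copies of each decision (or by a normalization argument on the computed optimum). An attractive alternative, which would avoid the product entirely, is an exchange lemma asserting that an $A$-optimal winning $\winstrategy$ can be chosen with $\dstrat{\loosestrategy}(\winstrategy)\le \dstrat{\winstrategy}(\loosestrategy)$: intuitively, a deviation that is \emph{necessary} for winning at a vertex $\loc$ has its $\loosestrategy$-successor leading into a losing region, which truncates every $\loosestrategy$-play through $\loc$ before it can gather more $E$-vertices than some $\winstrategy$-play does, so that the single-mode game of the previous paragraph already computes $\dHH$. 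Either way, acyclicity of $\game^{\loosestrategy}$ is what makes occurrences coincide with distinct vertices and is essential precisely at this step.

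Finally, solving the (augmented) shortest-path game in polynomial time produces an optimal MD-strategy of the constructed game, from which the keep/switch decisions define a winning MD-strategy $\winstrategy$ for \ReachPl in $\game$ minimizing $\dHH(\winstrategy,\loosestrategy)$, as required; a $\dHH$-minimal explanation is then obtained from $\winstrategy$ as in Proposition~\ref{prop:difference_explanation}. I expect the reconciliation of the $\loosestrategy$-play term with the winning-tied $\winstrategy$-play term, and the attendant consistency of $E$, to be the crux of the argument, with everything else being a routine instantiation of polynomial-time shortest-path games.
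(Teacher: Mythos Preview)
Your overall plan---reduce to a shortest-path game with nonnegative weights, as the paper indeed does---is sound, and your treatment of the first term $\dstrat{\winstrategy}(\loosestrategy)$ is correct: the acyclicity observations for $\game^{\loosestrategy}$ and for the reachable part of $\game^{\winstrategy}$ are exactly what is needed to turn the vertex-count into an additive edge cost, and the resulting game computes $\min_{\winstrategy\text{ winning}}\dstrat{\winstrategy}(\loosestrategy)$ in polynomial time. The gap is precisely where you anticipate it, and neither of your two proposed fixes closes it.

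Your ``exchange lemma'' is false. Take $\LocsSafe=\{v_0,c\}$, $\LocsReach=\{a,b\}$, a target $t$ and a non-target sink $s$, with edges $v_0\!\to\!a$, $v_0\!\to\!b$, $a\!\to\!t$, $a\!\to\!c$, $b\!\to\!t$, $b\!\to\!s$, $c\!\to\!b$, $c\!\to\!s$, and let $\loosestrategy(a)=c$, $\loosestrategy(b)=s$. Then $\game^{\loosestrategy}$ is acyclic, $\loosestrategy$ is losing, and the unique winning MD-strategy $\winstrategy$ has $\winstrategy(a)=\winstrategy(b)=t$, so $E=\{a,b\}$. Every $\winstrategy$-play meets $E$ in a single vertex, hence $\dstrat{\winstrategy}(\loosestrategy)=1$; but the $\loosestrategy$-play $v_0\,a\,c\,b\,s$ meets both $a$ and $b$, so $\dstrat{\loosestrategy}(\winstrategy)=2$ and $\dHH(\winstrategy,\loosestrategy)=2$. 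Thus $\min_{\winstrategy}\dstrat{\winstrategy}(\loosestrategy)=1<2=\min_{\winstrategy}\dHH(\winstrategy,\loosestrategy)$, so the single-mode shortest-path game does \emph{not} compute $\dHH$. Your intuition that ``$\loosestrategy$-plays are truncated before gathering more $E$-vertices'' fails because a $\loosestrategy$-play can leave the $\winstrategy$-reachable region and later \emph{re}-enter it via a $\SafePl$-edge, picking up further $E$-vertices.

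Your product-with-mode-bit idea also does not yield a shortest-path game. If the mode is part of the state, a memoryless $\ReachPl$ may legally make different keep/switch choices in the two copies---exactly the inconsistency you flag. ``Identifying the two mode-copies of each decision'' removes the mode from the $\ReachPl$-decision vertex, but then the outgoing transitions can no longer be mode-dependent (in $\loosestrategy$-mode the move must be overridden to $\loosestrategy(\loc)$ while in the other mode it must follow the chosen successor), so you no longer have an ordinary game graph; equivalently, you are imposing an equality constraint across distinct vertices of the arena, which is outside the scope of \cite{KhachiyanBBEGRZ-07}. A vague ``normalization argument'' cannot repair this: the example above shows that the coupling between the two suprema through $E$ is essential and can strictly raise the optimum. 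You therefore need a genuinely different encoding that lets a single $\ReachPl$-choice at each $\loc$ simultaneously determine the cost contribution in both the $\winstrategy$-play and along the (fixed) DAG $\game^{\loosestrategy}$, rather than two separate modes to be reconciled after the fact.
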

\section{Conclusion and Outlook}
\label{sec:conclusion}
The  notion of $d$-counterfactual cause for a distance function $d$ in transition systems turned out to be checkable in polynomial time for the distance functions $\dpref$, $\dhamm$, and $\dlev$ and so it has the potential to be employed in efficient tools to provide understandable explanations of the behavior of a system. 
In our algorithmic results for safety effects $\Phi$, one caveat remains: we  only considered finite executions reaching a cause candidate $C$ and satisfying $\Phi$. Allowing also finitely representable, e.g., ultimately periodic paths, constitutes a natural   extension, which requires adjustments in the provided algorithms.
 
The problem of finding good causes remains as future work: Whenever causality can be checked in polynomial time,  there is an obvious non-deterministic polynomial-time upper bound on the problem to decide whether there are causes below a given size, but the precise complexities are unclear. A further idea is to use the distance function to assess how good a cause is by considering  the distance from the actual execution to the closest executions avoiding a cause. For reachability effects and the prefix and Hamming distance, the set of direct predecessors optimizes this distance. For other distance functions or safety causes, this measure could, nevertheless, be more useful.
The search for similar  measures for the quality of causes constitutes an interesting direction for future  work.

In reachability games, we saw that the analogous definition of $D$-counterfactual causes can be checked in polynomial time for the Hausdorff-lifting $\dpref^H$ of the prefix metric, as well.
For other distance functions, 
the definition seems to lead to complicated notions due to the involved quantification over all MD-strategies avoiding the cause and having a minimal distance to a given strategy.
A closer investigation of these notions might, nevertheless, be a fruitful subject for future research.
However, our analysis of the 
 conceptually simpler $D$-minimal explanations provides insights into the complications one might encounter here.
 For the Hausdorff-inspired distance function $\dHH$, we showed that already the threshold problem for the distance between two given MD-strategies is NP-hard.
 Furthermore, for the relatively simple distance function $\dhamm^s$, checking the $\dhamm^s$-minimality of an explanation is in coNP-complete. For the Hausdorff-inspired distance function \dHH, checking \dHH-minimality is
  not in P unless P$=$NP.

\bibliographystyle{abbrv}

\bibliography{main}

\end{document}